\documentclass[conference]{IEEEtran}
\IEEEoverridecommandlockouts

\usepackage{cite}
\usepackage{amsmath,amssymb,amsfonts}
\usepackage{amsthm}
\usepackage{graphicx}
\usepackage{textcomp}
\usepackage{comment}
\usepackage{xcolor}
\usepackage{algorithm}       
\usepackage{algpseudocode}   
\usepackage[table]{xcolor}
\usepackage{tcolorbox}
\newtheorem{definition}{Definition}
\newtheorem{corollary}{Corollary}
\usepackage[numbers]{natbib}
\newtheorem{theorem}{Theorem}

\usepackage{filecontents}
\usepackage{tikz}
\usepackage{graphicx}

\usepackage{threeparttable}
\usepackage[hidelinks,breaklinks=true]{hyperref}
\usepackage{pifont}
\usepackage{color}
\usepackage{caption}
\usepackage{subcaption}
\usepackage{enumitem}
\usepackage{cleveref}
\usepackage{bookmark}
\usepackage{multirow} 
\usepackage{booktabs}
\usepackage{siunitx}
\usepackage{tabularx}
\newcommand{\cmark}{\ding{51}}
\newcommand{\xmark}{\ding{55}}

\renewcommand{\S}{Sec.}
\newcommand{\F}{Fig.}
\newcommand{\T}{Table }
\newcommand{\E}{Eq.}

\newcommand{\Alg}{Algorithm }

\algrenewcommand{\algorithmiccomment}[1]{%
  \hfill{\color{green!50!black}$\triangleright$ #1}%
}

\def\BibTeX{{\rm B\kern-.05em{\sc i\kern-.025em b}\kern-.08em
    T\kern-.1667em\lower.7ex\hbox{E}\kern-.125emX}}
    
\begin{document}

\title{CertMoE: Robustness Certification for Mixture-of-Experts LLMs via Routing-Aware Smoothing}

\author{
\IEEEauthorblockN{
Qingyue Wang\IEEEauthorrefmark{1},
Yudong Gao\IEEEauthorrefmark{1},
Lixia Wang\IEEEauthorrefmark{2},
Ruixuan Huang\IEEEauthorrefmark{1},
Yuanyuan Yuan\IEEEauthorrefmark{3},
Yangqiu Song\IEEEauthorrefmark{1},
Shuai Wang\IEEEauthorrefmark{1}
}
\IEEEauthorblockA{\small
\IEEEauthorrefmark{1}The Hong Kong University of Science and Technology\\
\IEEEauthorrefmark{2}
Academy of Mathematics and Systems Science,
Chinese Academy of Sciences\\
\IEEEauthorrefmark{3}Tsinghua University
}
\thanks{Corresponding author: Yuanyuan Yuan
(yyyuan@mail.tsinghua.edu.cn).
Other authors' emails:
Qingyue Wang (qingyue.wang2018@gmail.com);
Yudong Gao (ygaodj@connect.ust.hk);
Lixia Wang (wanglixia@amss.ac.cn);
Ruixuan Huang (rhuangbi@connect.ust.hk);
Yangqiu Song (yqsong@cse.ust.hk);
Shuai Wang (shuaiw@cse.ust.hk).}
}

\maketitle

\begin{abstract}
Provable robustness is a cornerstone of trustworthy machine learning, aiming to guarantee model predictions under adversarial perturbations. Existing certification frameworks, inspired by abstract interpretation or symbolic execution, typically assume a fixed computation graph, where input perturbations or variations affect only model outputs but not the underlying execution path. This assumption, however, is violated in Mixture-of-Experts (MoE) Large Language Models (LLMs), whose input-dependent routing induces discrete and dynamic execution paths. Crucially, this dynamic computation changes the certified object from a single fixed function to a composition of input-dependent sub-functions, leading to unreliable certification guarantees.

This paper presents \textsc{CertMoE}, the first routing-aware robustness certification framework for MoE LLMs, which explicitly accounts for input-induced routing dynamics in certification. Specifically, we introduce a relaxed routing-containment criterion, which certifies that perturbed Top-$K$ routing is contained within a larger certified expert pool. Based on this guarantee, \textsc{CertMoE} then constructs the corresponding induced sub-network and certifies its prediction consistency. This hierarchical certification procedure yields a joint probabilistic guarantee over routing containment and sub-network robustness, providing sufficient conditions under which the original prediction remains invariant.
Extensive experiments across recent MoE LLMs demonstrate that dynamic routing makes certificates issued by existing methods frequently violated, while \textsc{CertMoE} reduces the empirical violation rate with routing-aware guarantees, e.g., from more than $30\%$ to $0$, yielding certificates that more faithfully reflect the robustness of MoE models.
\end{abstract}

\begin{IEEEkeywords}
Robustness Certification, Mixture-of-Experts
\end{IEEEkeywords}

\section{Introduction}


Mixture-of-Experts (MoE)~\cite{jacobs1991adaptive, shazeer2017outrageously} has become the key architecture for scaling large language models (LLMs) while keeping inference costs manageable. It has been adopted in the backbone of general-purpose systems such as Gemini~\cite{team2023gemini}, Deepseek~\cite{dai2024deepseekmoe}, and Grok~\cite{xai_grok41_2025}. Additionally, recent work has introduced MoE models into software engineering tasks, such as vulnerability detection~\cite{yang2025one} and repair~\cite{yuan2025leveraging}, where model errors can directly affect system security.
As MoE LLMs are increasingly deployed in reliability- and security-sensitive settings~\cite{amershi2019software}, it becomes pressing to establish robustness guarantees for MoE LLMs in a meaningful and reliable way.

A principled way to establish such guarantees is through formal robustness certification~\cite{gowal2018effectiveness,cohen2019certified}, which provides provable robustness guarantees by verifying prediction invariance within a specified input perturbation or variation region. Existing certification techniques, including deterministic methods that are based on interval bound propagation~\cite{mao2023understanding}, abstract interpretation~\cite{Gehr2018AI2SA}, symbolic execution~\cite{symbolexecu18}, etc., and probabilistic methods like randomized smoothing~\cite{cohen2019certified,mangal2019robustness}, typically rely on the assumption that the computation graph of the certified model is fixed. Thus, the model can be represented as a single, unchanged function throughout the certification procedure, enabling output behaviors to be bounded or probabilistically controlled over the specified input region.



However, this assumption is violated by MoE LLMs, whose input-dependent routing decisions induce dynamic computation graphs (see \F~\ref{fig:motivate_arch}). In particular, inputs within the certified region can alter the activated expert set, causing the model to execute through different computation paths, and consequently leading to different underlying functions for the same certified input region. As illustrated in~\F~\ref{fig:motivate_cert}, a minor perturbation can switch the activated experts from $\{E_1, E_4\}$ to $\{E_2, E_4\}$, turning the certified model from $f'$ to $f''$. 
Consequently, although existing methods may still be applied without crashes, \textit{the issued certificate becomes arbitrary because the certified model has already changed.}
Empirically, on the SST2 dataset (see \S~\ref{sec:main_results}), we find that $>30\%$ of certificates issued by existing methods are invalid: standard attacks can frequently flip the model's outputs within the input regions where the model is certified to be robust.

\captionsetup[subfigure]{labelformat=parens,labelsep=none,font=normalsize}

\begin{figure*}[t]
\centering
\subcaptionbox{~Architecture of Dense and MoE Models. \label{fig:motivate_arch}}{%
    \includegraphics[scale=0.36]{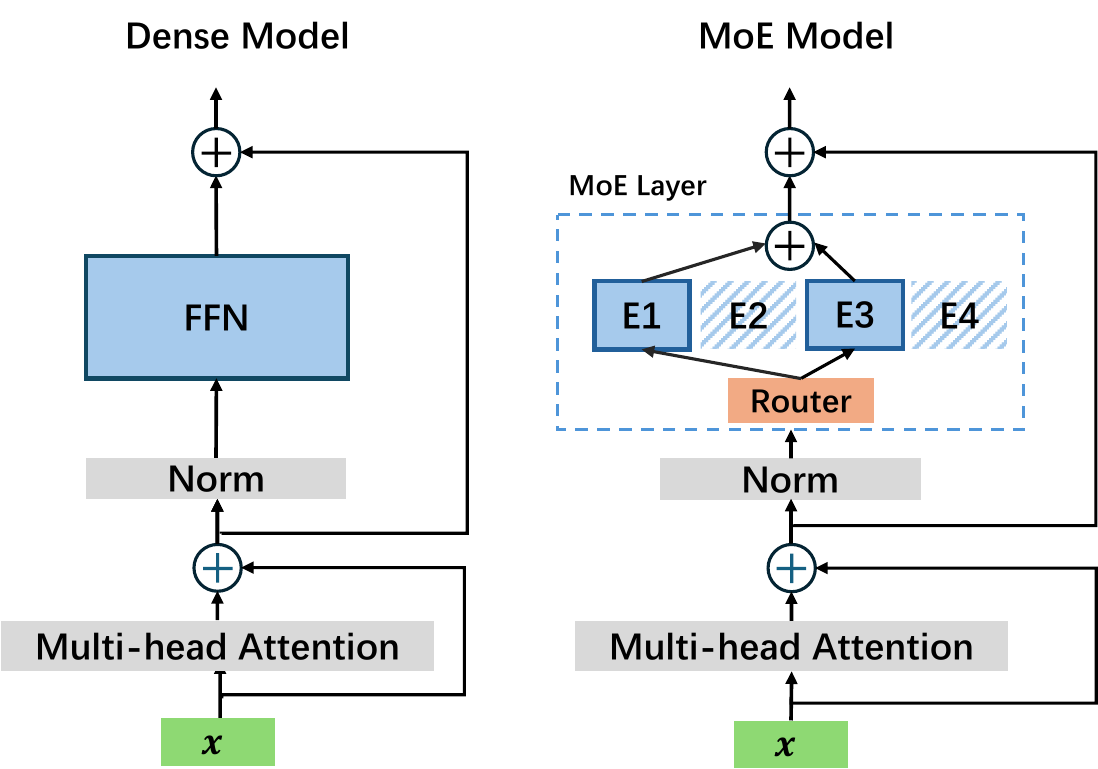}
}%
\hspace{3em}%
\subcaptionbox{~Certification on Dense and MoE Models. \label{fig:motivate_cert}\phantom{}}{%
    \includegraphics[scale=0.36]{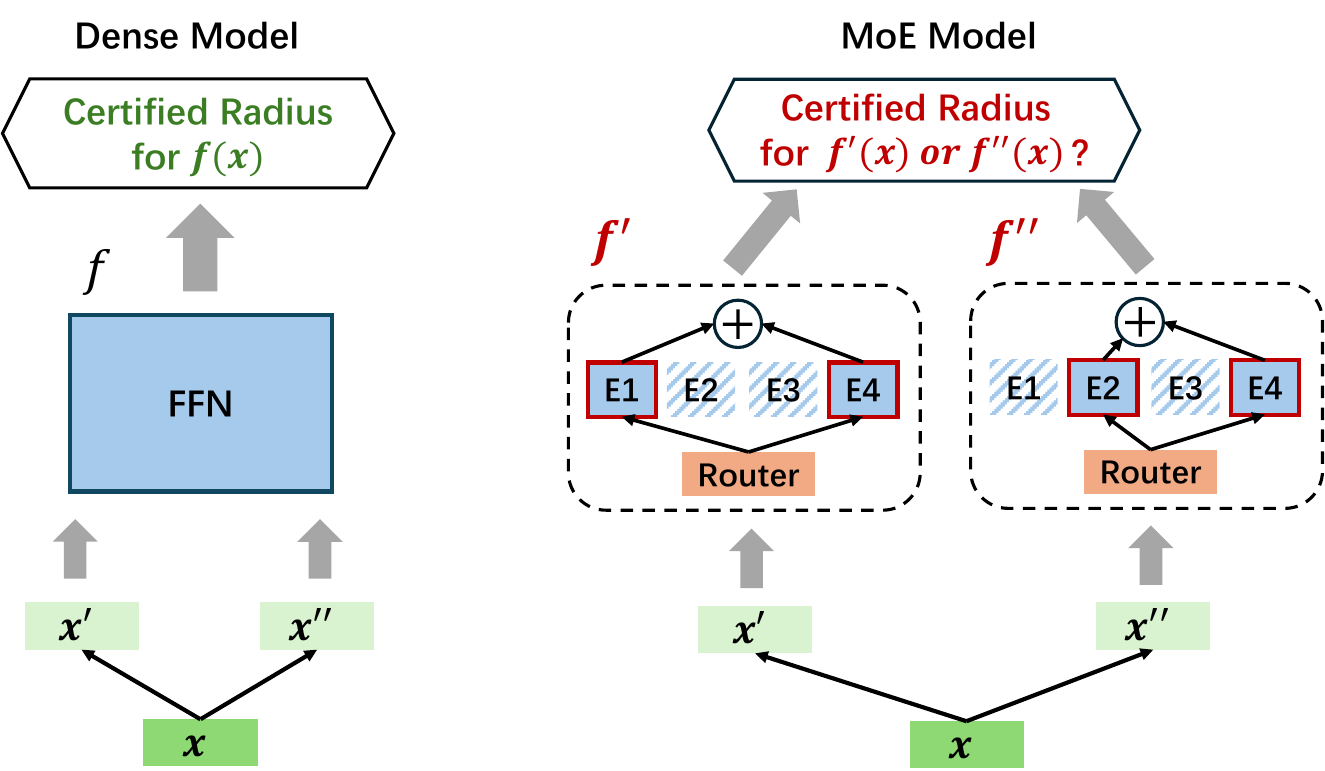}
}%

\caption{
Comparison between dense and MoE models.
(a) Architecture: MoE replaces the FFN layer with routed experts.
(b) Certification: In dense models, the model function $f$ remains static under input perturbations ($x', x''$). In MoE models, input perturbations trigger different routing paths, leading to functional instability ($f' \neq f''$) that complicates the derivation of a consistent certified radius.
Only a single MoE layer is shown for simplicity.
}
\label{fig:motivate}
\end{figure*}

Existing studies have begun to explore and understand the adversarial robustness of MoE models. Prior work shows that routing structure, expert similarity, and expert redundancy can substantially affect the robustness of sparse MoEs~\cite{puigcerver2022adversarial}. More recent work further studies robustness from the perspective of output sensitivity and robust training~\cite{zhangoptimizing}. These studies provide valuable insights into the robustness behavior of MoE models and the architectural factors that influence it. However, they mainly focus on analyzing robustness factors or improving training-time robustness, while how to perform  certification for the MoE model remains less clear.

This paper points out that robustness certification for MoE models should explicitly account for perturbation-induced changes in model structure: a certificate is meaningful only if the model's execution paths are covered w.r.t. the certified input region. To this end, we propose \textsc{CertMoE}, the first \emph{routing-aware} robustness certification framework for MoE LLMs. \textsc{CertMoE} extends existing certification methods to MoE models by explicitly accounting for perturbation-induced computation graph changes. In this paper, we instantiate \textsc{CertMoE} with randomized smoothing, a certification framework that scales to modern LLMs. Specifically, \textsc{CertMoE} decomposes certification into a hierarchical procedure that certifies both routing containment and prediction stability. We first aggregate token-level expert selections into sequence-level routing and certify their $\rho$-relaxed robustness under bounded input perturbations, where $\rho$ decides the size of the certified expert pool.
This avoids overly conservative exact-routing requirements, while still covering the perturbation-induced execution paths. After that, we restrict the candidate experts to the certified routing pools to construct an induced sub-network, and certify the robustness of this sub-network's output. Finally, a union bound combines the routing and output certificates into an end-to-end robustness guarantee.


We develop \textsc{CertMoE} on top of two smoothing-based certification frameworks, RanMASK~\cite{zeng2023certified} and SelfDenoise~\cite{ji2024advancing}, and conduct experiments on two classification tasks across three representative MoE LLMs.
Both baselines and their \textsc{CertMoE}-enhanced variants are evaluated using \emph{violation rate} (VR), which measures the fraction of issued certificates for which standard attacks find adversarial examples within the certified input regions.
We find that certificates from existing methods are violated frequently, with VR exceeding $30\%$ on SST-2. By incorporating routing-awareness, \textsc{CertMoE} substantially reduces such violations.
Further ablations show that the certified expert-pool size of routing-containment is critical: a larger pool relaxes the routing-containment condition and makes \textsc{CertMoE} more likely to assign certificates to inputs, but overly permissive pools weaken the routing-containment constraint and may lead to more certificate violations. Beyond final certificates, \textsc{CertMoE} can decompose certification outcomes into routing-containment and sub-network components, enabling failure attribution. As an on-demand certification procedure, it incurs roughly 2$\times$ runtime only when a certificate is requested, while leaving standard MoE inference unchanged.
%

%

Our contributions are summarized as follows:
\begin{itemize}
    \item We identify a fundamental mismatch between existing robustness certification and MoE LLMs: existing certifiers assume a fixed computation graph inside the model, whereas MoE routing turns the certified model into different input-dependent functions, resulting in arbitrary certificates that are frequently violated in practice.

    \item We propose \textsc{CertMoE}, a framework that enables robustness certification for MoE LLMs by instantiating existing certifiers with routing-aware guarantees. \textsc{CertMoE} decomposes certification into relaxed routing containment and output stability, and combines their certificates via a union bound to provide end-to-end robustness guarantees.
    

    \item Our empirical results show that \textsc{CertMoE} substantially reduces certificate violations. Further ablations analyze the trade-off induced by certified expert-pool size and demonstrate that \textsc{CertMoE} enables failure attribution with moderate on-demand certification overhead.
\end{itemize}


\section{Background and Related Work}
\label{sec:prelim}
\noindent \textbf{Mixture-of-Experts LLMs.}
MoE architectures have been widely adopted to scale large language models efficiently, as demonstrated by systems such as Switch Transformer~\cite{fedus2022switch} and Deepseek-V2~\cite{deepseekv2}. 
In Transformer-based MoE models, the standard feed-forward network (FFN) in a transformer block is typically replaced by an MoE layer, which consists of a router and multiple experts, where each expert is itself an independent FFN, as illustrated in~\F~\ref{fig:motivate_arch}.  Formally, at layer $l$, let $\mathcal{E} = \{E_1, E_2, \dots, E_N\}$ denote 
the pool of $N$ experts. Given an input sequence $\mathbf{x}=\{x_1, x_2, 
\dots, x_n\}$, let $h_t \in \mathbb{R}^D$ denote the hidden representation 
of the $t$-th token. The router computes a gating distribution 
over experts as $\mathbf{p}_t = \operatorname{Softmax}(W_r h_t),$
where $W_r \in \mathbb{R}^{N\times D}$ are the learnable routing parameters. 
Finally, the 
active experts are selected as $\mathcal{K}_t = \operatorname{Top-K}(\mathbf{p}_t)$, 
i.e., the top-$K$ experts ranked by $\mathbf{p}_t$~\cite{jiang2024mixtral}. Overall, an MoE layer performs conditional computation: different inputs may activate different expert subsets and therefore induce different computation paths.


\noindent \textbf{Robustness Guarantee.}
Unlike heuristic defenses that can be broken by adaptive attacks~\cite{croce2020reliable}, certified robustness provides provable guarantees against adversarial perturbations. Following existing work on text robustness~\cite{cohen2019certified}, we consider token-level perturbations defined by Hamming distance.
Formally, consider an input sequence $\mathbf{x} \in \mathcal{X}$ and a text classifier $f: \mathcal{X} \rightarrow \mathcal{Y}$. Let $f(\mathbf{x})=y$ and  $\mathcal{B}(\mathbf{x}, d)$ represents a perturbation ball centered at $\mathbf{x}$ with radius $d\in\mathbb{N}$:
\begin{equation}
    \mathcal{B}(\mathbf{x}, d) = \{ \mathbf{x}' \in \mathcal{X} \mid ||\mathbf{x}-\mathbf{x}'||_0 \leq d \}.
\end{equation}
where $||\mathbf{x}-\mathbf{x}'||_0=\sum_{i=1}^n\mathbb{I}\{x_i\neq x_i'\}$ is the Hamming distance and $\mathbb{I}(\cdot)$ is the indicator function. The classifier $f$ is considered \textit{certifiably robust} at input $\mathbf{x}$ within radius $d$ if $: 
    \forall \mathbf{x}' \in \mathcal{B}(\mathbf{x}, d), \quad f(\mathbf{x}')=y. $
The certified radius $d$ quantifies the model's tolerance to perturbations. Generally, the certification objective is to design a method that, given an arbitrary input $\mathbf{x}$, certifies a radius $d$ within which the prediction is guaranteed to remain correct.

\noindent \textbf{Randomized Smoothing on Text.}
Robustness certification has been widely studied for neural networks, including
exact verification and relaxation-based methods~\cite{singh2019abstract,mao2023understanding,
muller2022prima,gowal2018effectiveness}. However, these methods often face
scalability challenges on modern deep models. Recently, randomized smoothing has emerged
as a scalable alternative by certifying a smoothed classifier through repeated
noisy inference~\cite{cohen2019certified,zeng2023certified,
zhang2023diffsmooth}. In the text domain, this is commonly instantiated by
randomly masking or perturbing tokens and aggregating predictions over the
resulting noisy inputs~\cite{zeng2023certified,ji2024advancing}. We adopt this
paradigm because it is compatible with discrete token perturbations and
applicable to LLM-scale models.

Consider a text classification task with input $\mathbf{x}=\{x_1, \dots, 
x_n\}$ and label $y \in \mathcal{Y}$. Let $f: \mathcal{X} \rightarrow 
\mathcal{Y}$ denote a base LLM classifier. Consistent with prior work~\cite{zeng2023certified},
we adopt a masking-based randomized smoothing strategy: given a 
perturbation ratio $m \in [0,1]$, we randomly select $s=\lfloor(1-m)\cdot 
n\rfloor$ positions of $\mathbf{x}$ to keep, and then replace the remaining 
tokens with the special \texttt{[MASK]} token. Let $\mathcal{M}(\mathbf{x}, 
\mathcal{H})$ denote this masking operation, where $\mathcal{H} \subseteq 
\{1,\dots,n\}$ is the set of kept (unmasked) indices, e.g., for $\mathbf{x}=$
``A B C D E'' and $\mathcal{H}=\{1,3,5\}$, $\mathcal{M}(\mathbf{x}, 
\mathcal{H})=$``A [MASK] C [MASK] E''. The smoothed classifier is 
defined as
\begin{equation}
g(\mathbf{x}) = \arg\max_{c \in \mathcal{Y}} 
\mathbb{P}_{\mathcal{H} \sim \mathcal{U}_{n,s}} \big( f(\mathcal{M}(\mathbf{x}, \mathcal{H})) = c \big),
\label{eq:base_smooth}
\end{equation}
where $\mathcal{U}_{n,s}$ denotes the uniform distribution over all 
$\binom{n}{s}$ possible selections of $s$ positions from $n$ positions, 
and predictions are aggregated over multiple sampled masks. This 
allows us to derive a certified radius $d$ for the prediction $g(\mathbf{x})$, 
ensuring that the model prediction remains invariant.
\begin{figure}[t]
  \centering
  \includegraphics[scale=0.3]{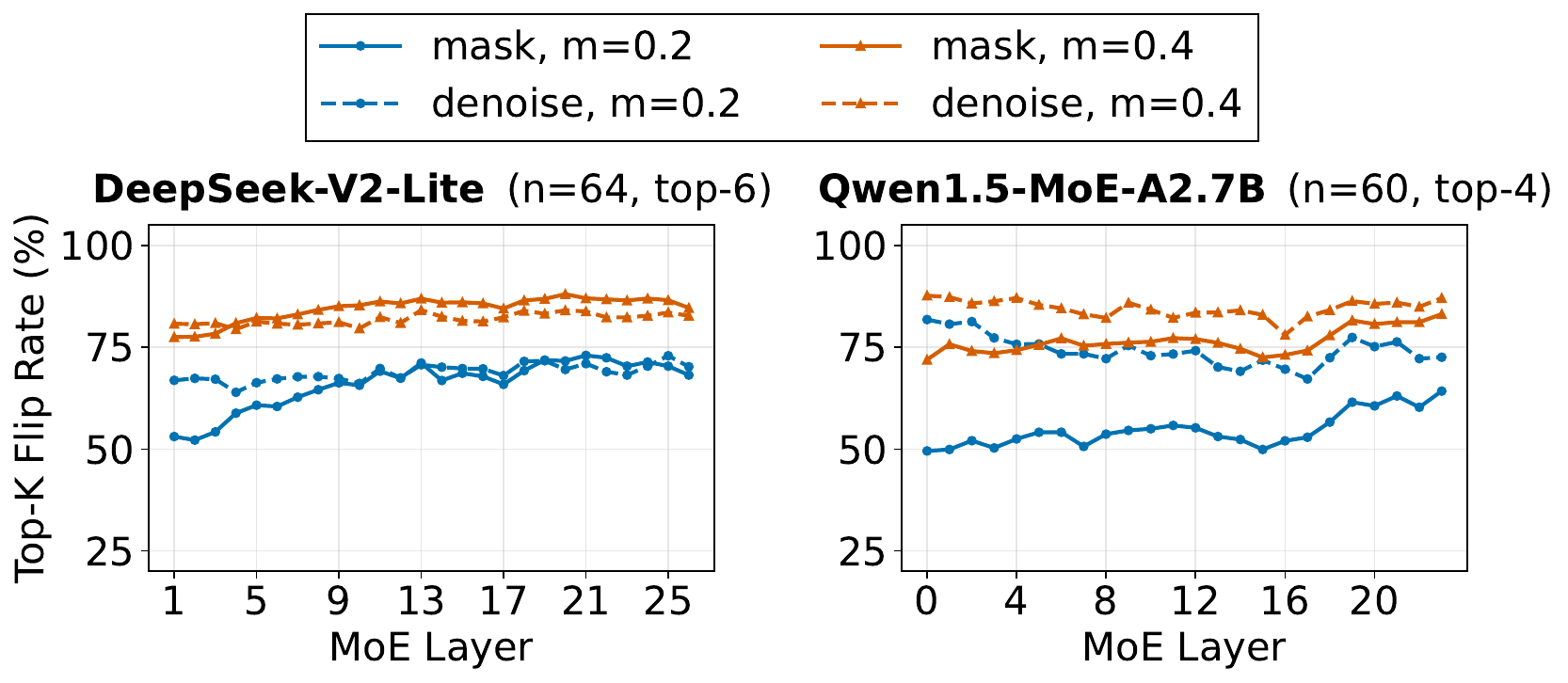}
\caption{Layer-wise Top-$K$ routing flip rates on SST-2, measured as the percentage of tokens whose selected Top-$K$ expert set differs from that of the clean input at each MoE layer.}

  \label{fig:topk_flip_ratio}
\end{figure}

\section{Motivation and Application Scope}
\label{sec:motivation}

\noindent \textbf{Key Insight.} 
The fundamental limitation of applying existing certification techniques to MoE models lies in the \textit{routing-induced dynamics} of the computation graph. Existing certification methods, such as randomized smoothing~\cite{cohen2019certified}, implicitly assume the underlying model is a fixed, deterministic function of the input. However, in MoE LLMs, the Top-$K$ routing is inherently discrete and highly sensitive to input variations. As a result, perturbing the input does not merely perturb the output; it can also change which function (i.e., which subset of experts) is being certified.

\F~\ref{fig:topk_flip_ratio} shows a layer-wise analysis on DeepSeek-V2-Lite and Qwen1.5-MoE-A2.7B, where a majority of tokens (over 50\%) undergo Top-$K$ routing flips due to input perturbations. 
confirming that input perturbations can substantially alter expert selection and introducing discontinuities in the predictive function. Consequently, the randomness involved in the certification process reflects not only output uncertainty but also routing instability. This leads to \emph{unreliable certified radii} that do not faithfully characterize the true robustness of the model.
Based on this analysis, we point out that robustness guarantees in MoE models should not be conducted in an end-to-end way, and that meaningful certification instead requires a decoupled hierarchical approach.


\noindent \textbf{Positioning \& Application Scope.}
Following the standard robustness certification setting~\cite{cohen2019certified,Gehr2018AI2SA}, \textsc{CertMoE} is an on-demand certification tool: it is invoked only when a certificate is required for a given input, while the MoE model otherwise performs standard inference. Table~\ref{tab:positioning} highlights this positioning: \textsc{CertMoE} extends verification-style certification from fixed computation graphs to LLM-scale MoE models with input-dependent execution paths. We evaluate \textsc{CertMoE} on text-only MoE LLMs because they provide a representative and practically important setting where input-dependent routing directly affects certification. Since most text robustness certification methods focus on classification-style outputs, we formulate robustness as label invariance. Nevertheless, the framework is not tied to this setting and may be extended to other tasks and modalities, as discussed in Sec.~\ref{sec:discussion}.

\begin{table*}[t]
\centering
\caption{Positioning of \textsc{CertMoE} relative to representative AI/ML 
certification and MoE robustness analysis.
D1: Formal or probabilistic guarantee; D2: Applicable to LLM-scale 
models; D3: Handles Dynamic Computation Graphs.}
\label{tab:positioning}
\footnotesize
\setlength{\tabcolsep}{4pt}
\renewcommand{\arraystretch}{1.05}
\begin{tabular}{p{0.19\linewidth} p{0.53\linewidth} 
                >{\centering\arraybackslash}p{0.07\linewidth}
                >{\centering\arraybackslash}p{0.07\linewidth}
                >{\centering\arraybackslash}p{0.07\linewidth}}
\toprule
\textbf{Category} & \textbf{Representative Methods} &
\textbf{D1} & \textbf{D2} & \textbf{D3} \\
\midrule
IBP / Linear Relaxation &
Tight Linear Approx~\cite{10.1145/3551349.3556907} (ASE '22), SharedCert~\cite{10.1007/978-3-031-13185-1_7} (CAV '22) &
\cmark & \xmark & \xmark \\
\addlinespace[3pt]

Abstract Interpretation / Convex Abstraction &
PRIMA~\cite{10.1145/3498704} (POPL '22), Abstraction-Refinement~\cite{10.1145/3544790} (TOSEM '24), PreBoxes~\cite{10.1145/3650212.3680393} (ISSTA '24), FairQuant~\cite{kim2025fairquant} (ICSE '25) &
\cmark & \xmark & \xmark \\
\addlinespace[3pt]

Solver-based Verification &
QVIP~\cite{10.1145/3551349.3556916} (ASE '22), Fairify~\cite{10.1109/ICSE48619.2023.00134} (ICSE '23), VeriStable~\cite{10.1145/3643765} (FSE '24) &
\cmark & \xmark & \xmark \\
\addlinespace[3pt]

Randomized Smoothing &
DiffSmooth~\cite{zhang2023diffsmooth} (USENIX Security '23), RanMASK~\cite{zeng2023certified} (ACL '23), SelfDenoise~\cite{ji2024advancing} (NAACL '24) &
\cmark & \cmark & \xmark \\
\addlinespace[3pt]

MoE Robustness Analysis &
AdvMoE~\cite{zhang2023robust} (ICCV '23), JTDMoE~\cite{zhangoptimizing} (ICML '25), RobMoE~\cite{puigcerver2022adversarial} (NeurIPS '22) &
\xmark & \cmark & \cmark \\

\midrule
\textbf{Routing-aware Certification} &
\textbf{\textsc{CertMoE} (Ours)} &
\cmark & \cmark & \cmark \\
\bottomrule
\end{tabular}
\end{table*}

\section{CertMoE: Routing-Aware MoE Certification}
\label{sec:certification}

\begin{figure*}[t]
  \centering
  \includegraphics[scale=0.37]{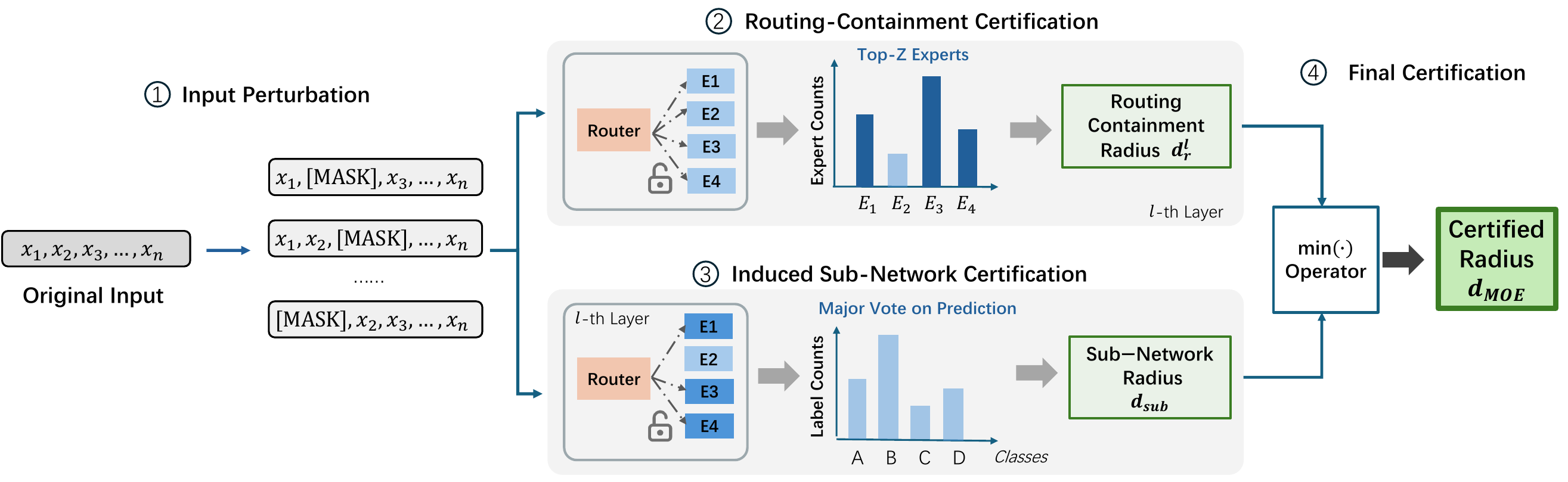}
\caption{
Overview of the \textsc{CertMoE} framework. \textsc{CertMoE} generates masked perturbations, certifies a Top-$Z$ routing-containment pool, certifies the induced sub-network restricted to this pool, and combines the two radii to obtain the final certified radius. The figure illustrates one MoE layer with $K=2$ and $Z=3$.
}
  \label{fig:method}
\end{figure*}

\noindent \textbf{Overview.}
\textsc{CertMoE} decomposes MoE robustness certification into two stages:
\emph{Routing-Containment Certification} first certifies that perturbed Top-$K$ routing remain within certified expert pools;
\emph{Induced Sub-Network Certification} then constructs the induced sub-network from these pools and applies an output-level certifier to verify prediction invariance.
This modular design separates MoE-specific routing certification from prediction certification, allowing the latter to be instantiated with existing output-level certifiers.
An overview of our method is shown in~\F~\ref{fig:method}.

\subsection{Routing Containment Certification}
\label{sec:cert_router}
This section defines smoothed routing for MoE layers and establishes a relaxed routing robustness guarantee.

\noindent \textbf{Routing Aggregation and Smoothed Routing.} 
Certifying token-level routing decisions is often impractical and overly restrictive: the certification space grows with sequence length and model depth, while routing changes for individual tokens may not affect the effective computation path or final prediction. Inspired by prior routing studies~\cite{wu2025gatebreaker, laisafex},  we aggregate token-level routing to a sentence-level routing decision. Formally, we denote the \textit{routing decision function} that maps an input sequence $\mathbf{x}$ to a set of top-$K$ activated experts as:
\begin{align}
R(\mathbf{x}) = \operatorname{Top-K}\left(
 \sum_{t=1}^{n} \mathbb{I}\!\left(e \in \operatorname{Top-K}(\mathbf{p}_t)\right) 
\right).
\end{align}
where $\mathbb{I}(\cdot)$ denotes the indicator function, $e\in\mathcal{E}$ is the selected expert, and $\mathbf{p}_t$ represents the routing probability for the $t$-th token. This counting-based representation avoids certifying each
token-level routing decision separately and provides a tractable object for
randomized smoothing.

Then, we smooth the routing decision by introducing randomness into the router's input. Given an input sequence $\mathbf{x}$ of length $n$, we randomly keep $s_r=\lfloor(1-m_r)\cdot n\rfloor$ tokens and mask the remaining positions. We denote the resulting masked input by $\mathcal{M}(\mathbf{x},\mathcal{H}_r)$, where $\mathcal{H}_r\sim\mathcal{U}_{n,s_r}$ is sampled uniformly from all subsets of $s_r$ token indices. 
For each expert $e\in\mathcal{E}$, we define its \textit{smoothed routing probability} as
\begin{align}
q_e(\mathbf{x}) =
\mathbb{P}_{\mathcal{H}_r\sim\mathcal{U}_{n,s_r}}
\left(e\in R(\mathcal{M}(\mathbf{x},\mathcal{H}_r))\right),
\label{eq:smoothed_prob}
\end{align}
where $R(\cdot)$ denotes the Top-$K$ expert selection. The \textit{smoothed routing decision} is to select the top-$K$ experts according to these probabilities:
\begin{align}
\label{eq:smooth_routing_dec}
\mathcal{T}_K(\mathbf{x}) := \operatorname{Top-K} \, (q_e(\mathbf{x})).
\end{align} 
Intuitively, $q_e(\mathbf{x})$ measures how likely expert $e$ is to be selected when the router observes randomly masked versions of the input. This expert-level characterization avoids reasoning over the $\binom{N}{K}$ possible Top-$K$ expert subsets, making it more scalable for large MoE models. 
These probabilities provide the basis for identifying consistently selected experts and constructing the certified expert pool.

\noindent \textbf{Relaxed Routing Containment.}
A straightforward notion of routing robustness is \emph{exact Top-$K$ stability}, 
where the selected experts remain unchanged under input perturbations. 
However, this requirement is often too restrictive for MoE LLMs. 
First, experts may provide redundant or complementary functions~\cite{huangmixture}, 
so partial changes within the Top-$K$ set may still lead to comparable model behavior. 
Second, token-level routing is inherently input-dependent: tokens with different 
semantic roles may activate different expert subsets, and enforcing exact 
Top-$K$ invariance can unnecessarily restrict the model's computation. 
To balance routing stability and model expressiveness, we introduce 
\textit{$\rho$-relaxed routing containment}, which relaxes exact 
Top-$K$ matching by requiring the Top-$K$ experts selected under perturbation 
to remain within a larger certified expert pool.

\begin{definition}[$\rho$-Relaxed: Certified Routing Containment]
Let $\mathcal{T}_K(\mathbf{x})$ denote the top-$K$ selected experts for 
input $\mathbf{x}$. For a given $\rho \in (0,1)$, define $Z = \lceil \rho N \rceil$, 
where $N$ is the total number of experts and $Z \ge K$. 
We say that the routing at $\mathbf{x}$ satisfies $d_r$-robust 
$\rho$-relaxed containment if, for any 
$\mathbf{x}'$ satisfying $\|\mathbf{x}' - \mathbf{x}\|_0 \le d_r$, we have
\begin{align}
\mathcal{T}_K(\mathbf{x}') \subseteq \mathcal{T}_Z(\mathbf{x}).
\label{eq:robust_router}
\end{align}
Here, $\mathcal{T}_Z(\mathbf{x})=\operatorname{Top-Z}(q_e(\mathbf{x}))$ 
denotes the certified expert pool formed by selecting the $Z$ experts with 
the largest smoothed routing probabilities.
\end{definition}

The parameter $\rho$ is a certification-time hyperparameter,  similar to the smoothing parameters used in randomized smoothing, such as the masking ratio or noise variance.
Such hyperparameters must be fixed before certifying test samples ~\cite{zhang2023diffsmooth} or calibration procedures~\cite{sun2026dual}. 
In practice, we select $\rho$ on a small held-out validation set, allowing it to adapt to each model and dataset without tuning on the certified test set.
Once selected, $\rho$ is fixed for all test samples in the corresponding dataset. 
The following theorem gives a sufficient condition for certifying the 
$\rho$-relaxed routing containment.

\begin{theorem}[Certification Condition for $\rho$-Relaxed Routing Containment]
   Let $q_e(\mathbf{x})$ and $q_e(\mathbf{x}')$ denote the smoothed routing probabilities of expert $e$
   for inputs $\mathbf{x}$ and $\mathbf{x}'$, respectively. Let $\mathcal{T}_K(\mathbf{x})$ and $\mathcal{T}_Z(\mathbf{x})$ be the set of top-$K$ and top-$Z$ expert sets at $\mathbf{x}$, respectively. For text $\mathbf{x}$, $\mathbf{x}'$, $\|\mathbf{x}' - \mathbf{x}\|_0 \le d_{\mathrm{r}}$, 
   if the following condition holds:
   \begin{align}
    \min_{e \in \mathcal{T}_K(\mathbf{x})} \underline{q}_e(\mathbf{x}) - \max_{j \notin \mathcal{T}_Z(\mathbf{x})} \overline{q}_j(\mathbf{x}) > 2 \beta \, \Delta_r,
    \label{eq:topk-routing-cert}
   \end{align}
where
\begin{align}
    \Delta_{r} = 1 - {\binom{n-d_{\mathrm{r}}}{s_r}}/{\binom{n}{s_r}}
\end{align}
and
\begin{align}
\beta := \max \Big\{ \;
&\mathbb{P}\Big( e \in R(\mathcal{M}(\mathbf{x}, \mathcal{H}_r))
    \,\Big|\, \mathcal{H}_r \cap \mathcal{I} \neq \emptyset \Big), \notag \\
&\mathbb{P}\Big( e \in R(\mathcal{M}(\mathbf{x}', \mathcal{H}_r))
    \,\Big|\, \mathcal{H}_r \cap \mathcal{I} \neq \emptyset \Big)
\;\Big\}.
\end{align}
then, with probability at least $1-\alpha$, $\mathcal{T}_K(\mathbf{x}') \subseteq \mathcal{T}_Z(\mathbf{x})$.
\label{the:routing_robust}
   \end{theorem}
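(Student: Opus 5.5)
Following the RanMASK-style analysis, the plan is to show that each expert's smoothed routing probability $q_e$ moves by at most $\beta\,\Delta_r$ between $\mathbf{x}$ and $\mathbf{x}'$. The claimed containment then follows from an ordering argument on these probabilities.

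\textbf{Step 1: split on overlap with the perturbed positions.} Let $\mathcal{I}=\{i: x_i\neq x'_i\}$, so $|\mathcal{I}|\le d_r$. Condition on whether the kept index set $\mathcal{H}_r$ meets $\mathcal{I}$.
\begin{itemize}
\item If $\mathcal{H}_r\cap\mathcal{I}=\emptyset$, then $\mathcal{M}(\mathbf{x},\mathcal{H}_r)=\mathcal{M}(\mathbf{x}',\mathcal{H}_r)$. Hence $R(\cdot)$ agrees on the two inputs, and the conditional probabilities that $e$ is selected coincide.
\item The event $\mathcal{H}_r\cap\mathcal{I}\neq\emptyset$ has probability $1-\binom{n-|\mathcal{I}|}{s_r}/\binom{n}{s_r}\le\Delta_r$, since this quantity is monotone in $|\mathcal{I}|$.
\end{itemize}
By the law of total probability,
\begin{align}
q_e(\mathbf{x})-q_e(\mathbf{x}') = \mathbb{P}(\mathcal{H}_r\cap\mathcal{I}\neq\emptyset)\,\bigl(P_e-P'_e\bigr),
\end{align}
where $P_e$ and $P'_e$ are the two conditional probabilities appearing in the definition of $\beta$. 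Both lie in $[0,\beta]$, so $|P_e-P'_e|\le\beta$. Therefore $|q_e(\mathbf{x})-q_e(\mathbf{x}')|\le\beta\Delta_r$ for every expert $e$. (If the definition of $\beta$ is read per expert, I would take the maximum over experts.)

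\textbf{Step 2: containment by ordering.} Suppose the condition \eqref{eq:topk-routing-cert} holds with true probabilities in place of the bounds. Then for every $e\in\mathcal{T}_K(\mathbf{x})$ and every $j\notin\mathcal{T}_Z(\mathbf{x})$,
\begin{align}
q_e(\mathbf{x}') \ge q_e(\mathbf{x})-\beta\Delta_r > q_j(\mathbf{x})+\beta\Delta_r \ge q_j(\mathbf{x}').
\end{align}
So at $\mathbf{x}'$, each of the $K$ experts in $\mathcal{T}_K(\mathbf{x})$ strictly beats every expert outside $\mathcal{T}_Z(\mathbf{x})$. Suppose some $j\notin\mathcal{T}_Z(\mathbf{x})$ were in $\mathcal{T}_K(\mathbf{x}')$. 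Then at least $K$ experts would have strictly larger score than $j$ at $\mathbf{x}'$, so $j$ could not rank in the top $K$, a contradiction. Hence $\mathcal{T}_K(\mathbf{x}')\subseteq\mathcal{T}_Z(\mathbf{x})$. Ties are excluded by the strict inequality.

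\textbf{Step 3: statistical part.} The quantities $\underline{q}_e$ and $\overline{q}_j$ are Monte Carlo confidence bounds, for example Clopper--Pearson. I would allocate the failure probability $\alpha$ across the $N$ experts by a union bound, or use simultaneous intervals. With probability at least $1-\alpha$, every bound holds at once, so the condition on the bounds implies the condition on the true $q$'s, and Step 2 applies.

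\textbf{Main obstacle: the role of $\beta$.} The factor $\beta$ involves the unknown $\mathbf{x}'$, so the statement is only a sufficient condition given $\beta$. In practice $\beta\le1$ can be used, which recovers the standard $2\Delta_r$ margin. The delicate point is that the certificate is meaningful only when $\beta$ is bounded uniformly over the ball; I would state explicitly that $\beta$ is taken as the worst case over $\mathbf{x}'\in\mathcal{B}(\mathbf{x},d_r)$ and over experts, or simply bounded by $1$.
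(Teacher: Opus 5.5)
Your proposal is correct and takes essentially the same route as the paper. Like you, the paper first establishes $|q_e(\mathbf{x})-q_e(\mathbf{x}')|\le\beta\Delta_r$ (deferring to the supplement the RanMASK-style total-probability argument you give in Step 1) and then applies the same ordering argument to conclude $\mathcal{T}_K(\mathbf{x}')\subseteq\mathcal{T}_Z(\mathbf{x})$. Your Step 3 (requiring all confidence bounds to hold simultaneously) and your remark that $\beta$ must be bounded uniformly over the perturbation ball, e.g.\ by $1$, spell out points the paper leaves implicit rather than departing from its argument.
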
 Here, $\beta$ quantifies the worst-case routing probability under the event that the sampled mask intersects with perturbed positions; and $\Delta_r$ represents the probability that at least one perturbed token position is included in the randomly masking set $\mathcal{H}_r$. 
\begin{proof}
Given text $\mathbf{x}$, $\mathbf{x}'$, for any expert $e\in\mathcal{E}$,  we have:
\begin{align}
|q_e(\mathbf{x}) - q_e(\mathbf{x'})|\leq \beta\Delta_r,
\label{eq:in_eq}
\end{align}
with the proof deferred to the supplementary material (\S~\ref{sec:avai}).
We then use this result to derive the main argument.
For the input $\mathbf{x}'$, we have:
 \begin{align}
    &\forall e \in \mathcal{T}_K(\mathbf{x}), \quad q_e(\mathbf{x}') \ge \underline{q}_e(\mathbf{x}) - \beta \Delta_r, \\
    &\forall j \notin \mathcal{T}_Z(\mathbf{x}), \quad q_j(\mathbf{x}') \le \overline{q}_j(\mathbf{x}) + \beta \Delta_r.
 \end{align} 
By the condition in~\E~\ref{eq:topk-routing-cert}, it follows that
\begin{equation}
\min_{e \in \mathcal{T}_K(\mathbf{x})} q_e(\mathbf{x}') >
\max_{j \notin \mathcal{T}_Z(\mathbf{x})} q_j(\mathbf{x}').
\end{equation}
Thus, under $\mathbf{x}'$, the $K$ experts in $\mathcal{T}_K(\mathbf{x})$ still have higher routing probabilities than any expert outside $\mathcal{T}_Z(\mathbf{x})$. 
Since $\mathcal{T}_K(\mathbf{x}) \subseteq \mathcal{T}_Z(\mathbf{x})$, these $K$ experts are all inside the certified pool, and no outside expert can enter the Top-$K$ set under $\mathbf{x}'$. 
Therefore, $\mathcal{T}_K(\mathbf{x}') \subseteq \mathcal{T}_Z(\mathbf{x})$.
    
\end{proof}

\subsection{Induced Sub-Network Certification}
\label{sec:cert_sub}
Given the certified expert pools, \textsc{CertMoE} constructs the induced sub-network and applies a base smoothing certifier to verify its prediction invariance.
\begin{definition}[Induced Sub-Network under Certified Expert Pools]
Let $f$ be a MoE LLM with top-$K$ routing, and let
$\{\mathcal{T}_Z^{(l)}(\mathbf{x})\}_{l\in\mathcal{L}}$
denote the certified expert pools obtained at input $\mathbf{x}$.
The \emph{induced sub-network} $f_{\mathrm{sub}}$ is defined by restricting, 
at each MoE layer $l$, the candidate experts to the certified pool 
$\mathcal{T}_Z^{(l)}(\mathbf{x})$, while the router still recomputes routing 
scores on the given input and performs top-$K$ selection among these 
restricted candidates. For any valid perturbed input $\mathbf{x}'\in\mathcal{X}$ 
around $\mathbf{x}$,
\begin{align}
    f_{\mathrm{sub}}(\mathbf{x}')
    :=
    f\big(
    \mathbf{x}';
    \{\mathcal{T}_Z^{(l)}(\mathbf{x})\}_{l\in\mathcal{L}}
    \big).
\end{align}
Here, $\mathcal{L}=\{1,2,\ldots,L\}$ denotes the set of MoE layers.
\end{definition}

Concretely, the induced sub-network keeps the original router unchanged but
restricts its candidate expert set to the certified pool at each MoE layer. That is,
the router logits are computed normally, logits of experts outside
$T_Z^{(l)}(x)$ are masked out, and the model applies its original top-$K$ routing
rule within the remaining candidates. When the routing-containment event holds
for every layer, this restriction does not alter the original MoE forward
computation: the experts selected by the original router already lie within the
certified pools, and therefore the original MoE and the induced sub-network
produce identical outputs.

Then, we define the \emph{smoothed classification probability} of class $c\in\mathcal{Y}$ as
\begin{align}
\pi_c(\mathbf{x}) = \mathbb{P}_{\mathcal{H}_o\sim\mathcal{U}_{n,s_o}}\big(f_{\mathrm{sub}}(\mathcal{M}(\mathbf{x}, \mathcal{H}_o))=c\big),
\end{align}
where $\mathcal{H}_o$ refers to the set of indices of tokens where $s_o=\lfloor(1-m_o)\cdot n\rfloor$ 
positions of $\mathbf{x}$ are kept (not masked). The prediction of the smoothed sub-network is then
$
g_{sub}(\mathbf{x}) = \arg\max_{c \in \mathcal{Y}} \pi_c(\mathbf{x}).
$
We next establish a robustness condition guaranteeing classification stability 
under input perturbations.

\begin{theorem}
[Induced Sub-Network Robustness Condition]
Let $g_{sub}(\mathbf{x})=y$ be its predicted class.  
For any input $\mathbf{x}'$ such that $\|\mathbf{x}' - \mathbf{x}\|_0 \le d_{\mathrm{sub}}$, if
\begin{align}
\pi_y(\mathbf{x}) - \max_{j \neq y} \pi_j(\mathbf{x}) > 2\gamma\Delta_{\mathrm{sub}},
\label{eq:smoothing-cert}
\end{align}
where 
    $\Delta_{\mathrm{sub}} = 1- {\binom{n-d_{\mathrm{sub}}}{s_o}}/{\binom{n}{s_o}}$
and $\gamma$ denotes the conditional probability that the classifier predicts class $c$ when the sampled mask retains at least one perturbed word. 
Then, with probability at least $(1-\delta)$, we have $
g_{sub}(\mathbf{x}') = y$.
\end{theorem}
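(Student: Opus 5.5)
The plan mirrors the proof of Theorem~\ref{the:routing_robust}, with the routing indicator $\mathbb{I}(e\in R(\cdot))$ replaced by the classification indicator $\mathbb{I}(f_{\mathrm{sub}}(\cdot)=c)$. Since $f_{\mathrm{sub}}$ is a fixed function once the pools $\{\mathcal{T}_Z^{(l)}(\mathbf{x})\}_{l\in\mathcal{L}}$ are determined at $\mathbf{x}$, the standard masking-based smoothing argument applies to it.

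\textbf{Step 1: perturbation bound for every class.} Let $\mathcal{I}=\{i: x_i\neq x_i'\}$, so $|\mathcal{I}|\le d_{\mathrm{sub}}$. Partition the randomness according to whether the kept set $\mathcal{H}_o$ meets $\mathcal{I}$.
\begin{itemize}
\item On the event $\mathcal{H}_o\cap\mathcal{I}=\emptyset$, the masked inputs coincide: $\mathcal{M}(\mathbf{x},\mathcal{H}_o)=\mathcal{M}(\mathbf{x}',\mathcal{H}_o)$. Hence $f_{\mathrm{sub}}$ returns the same label on both, and this event contributes identically to $\pi_c(\mathbf{x})$ and $\pi_c(\mathbf{x}')$.
\item The complementary event has probability $1-\binom{n-|\mathcal{I}|}{s_o}/\binom{n}{s_o}\le\Delta_{\mathrm{sub}}$. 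The inequality holds because this quantity is monotone in $|\mathcal{I}|$.
\end{itemize}
Writing each probability by the law of total probability and subtracting gives
\begin{align}
|\pi_c(\mathbf{x})-\pi_c(\mathbf{x}')| = \Delta\,\big|P_c-P_c'\big| \le \gamma\,\Delta_{\mathrm{sub}},
\end{align}
where $P_c$ and $P'_c$ are the conditional probabilities of predicting $c$ given $\mathcal{H}_o\cap\mathcal{I}\neq\emptyset$ at $\mathbf{x}$ and $\mathbf{x}'$. Taking $\gamma$ as the maximum of these two conditionals, exactly as $\beta$ is defined in Theorem~\ref{the:routing_robust}, gives $|P_c-P_c'|\le\max(P_c,P_c')\le\gamma$. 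Here I would mirror the deferred argument behind \E~\ref{eq:in_eq}.

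\textbf{Step 2: preserving the margin.} Apply the bound to $c=y$ and to every $j\neq y$:
\begin{align}
\pi_y(\mathbf{x}')\ge\pi_y(\mathbf{x})-\gamma\Delta_{\mathrm{sub}}, \qquad \pi_j(\mathbf{x}')\le\pi_j(\mathbf{x})+\gamma\Delta_{\mathrm{sub}}.
\end{align}
Condition~\E~\ref{eq:smoothing-cert} then yields $\pi_y(\mathbf{x}')>\max_{j\neq y}\pi_j(\mathbf{x}')$, hence $g_{sub}(\mathbf{x}')=y$.

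\textbf{Step 3: the probabilistic qualifier.} In practice $\pi_y(\mathbf{x})$ and $\max_{j\neq y}\pi_j(\mathbf{x})$ are replaced by Monte Carlo confidence bounds, for example Clopper--Pearson lower and upper bounds with total failure budget $\delta$ split via a union bound. With probability at least $1-\delta$ these bounds hold simultaneously, so the margin in \E~\ref{eq:smoothing-cert} computed from them lower-bounds the true margin, and Step 2 applies.

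\textbf{Main obstacle.} The delicate point is that $f_{\mathrm{sub}}$ must be the \emph{same} function when evaluated at $\mathbf{x}'$. The pools are fixed at $\mathbf{x}$ and are not recomputed from $\mathbf{x}'$; otherwise the coupling in Step 1 fails. I would state this explicitly. A secondary concern is that $\gamma$ involves $\mathbf{x}'$ and so is not directly computable; I would follow the paper's convention and treat it as a worst-case quantity, upper-bounded by $1$ if needed. With $\gamma=1$ the argument recovers the standard RanMASK certificate.
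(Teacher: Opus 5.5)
Your proof is correct and takes the paper's route. The paper's proof likewise rests on the analogue of \E~\ref{eq:in_eq} for the fixed function $f_{\mathrm{sub}}$ (pools frozen at $\mathbf{x}$), namely $|\pi_c(\mathbf{x})-\pi_c(\mathbf{x}')|\le\gamma\Delta_{\mathrm{sub}}$, followed by the same margin-transfer argument used for Theorem~\ref{the:routing_robust}, with the $1-\delta$ coming from the Clopper--Pearson estimates. One small point to tighten: Step~2 applies a single $\gamma$ to both $y$ and each $j\neq y$, so define $\gamma$ as the maximum of the conditional probabilities over all classes as well as over $\mathbf{x}$ and $\mathbf{x}'$, or else replace $2\gamma$ by $\gamma_y+\gamma_j$.
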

\begin{proof}
   The proof also follows from Eq.~\eqref{eq:in_eq}; see Section~1.2 of the supplementary material (\S~\ref{sec:avai}) for details.
\end{proof}

\begin{algorithm}[t]
\small
\caption{Routing-Aware MoE Certification (CertMoE)}
\label{alg:certmoe}
\begin{algorithmic}[1]
\Require $\mathbf{x}$, $y^*$, MoE $f$ with routers $\{R^{(l)}\}_{l\in\mathcal{L}}$,
         $s_r$, $s_o$, $N_r$, $N_r'$, $N_o$, $N_o'$, $K$, $Z$, $\alpha$, $\delta$
\Ensure Certified radius $d_{\mathrm{MoE}}$

\Function{CertifyRout}{$\mathbf{x}, R^{(l)}, s_r, N_r, N_r', K, Z, \alpha$}
    \State Sample $\{\mathcal{H}_i\}_{i=1}^{N_r}\sim\mathcal{U}(n,s_r)$ and compute routing counts $C_r[e]$
    \State $\mathcal{T}_K^{(l)} \gets \operatorname{Top\text{-}K}(C_r)$,\enskip
           $\mathcal{T}_Z^{(l)} \gets \operatorname{Top\text{-}Z}(C_r)$
    \State Sample $\{\mathcal{H}'_i\}_{i=1}^{N_r'}\sim\mathcal{U}(n,s_r)$ and compute $C'_r[e]$
    \State Derive $\underline{q}_e$ ($e\in\mathcal{T}_K^{(l)}$), $\overline{q}_j$ ($j\notin\mathcal{T}_Z^{(l)}$)
           from $C'_r$ at confidence $1-\alpha$
    \State \Return $(d_r^{(l)},\,\mathcal{T}_Z^{(l)})$ where
           $d_r^{(l)} = \max\{d : \text{Eq.}~\ref{eq:topk-routing-cert}\}$
\EndFunction

\Function{CertifySub}{$\mathbf{x}, y^*, f_{\mathrm{sub}}, s_o, N_o, N_o', \delta$}
    \State Sample $\{\mathcal{H}_i\}_{i=1}^{N_o}\sim\mathcal{U}(n,s_o)$ and compute
           $\hat{y} = \arg\max_{c}\sum_{i}\mathbb{I}[f_{\mathrm{sub}}(\mathcal{M}(\mathbf{x},\mathcal{H}_i))=c]$
    \State \Return N/A \textbf{ if } $\hat{y} \neq y^*$ \Comment{Abstain if prediction is incorrect}
    \State Sample $\{\mathcal{H}'_i\}_{i=1}^{N_o'}\sim\mathcal{U}(n,s_o)$ and compute class counts $C_o[c]$
    \State Derive $\underline{\pi}_y$, $\overline{\pi}_j$ from $C_o$ at confidence $1-\delta$
    \State \Return $\max\{d : \text{Eq.}~\ref{eq:smoothing-cert}\}$
\EndFunction

\Function{CertMoE}{$\mathbf{x}, y^*, f, \{R^{(l)}\}_{l\in\mathcal{L}}$}
    \State Compute $(d_r^{(l)},\,\mathcal{T}_Z^{(l)})$ for all $l\in\mathcal{L}$
           via $\Call{CertifyRout}{\mathbf{x}, R^{(l)}, s_r, N_r, N_r', K, Z, \alpha}$
    \State Construct $f_{\mathrm{sub}}$ restricting layer $l$'s routing to $\mathcal{T}_Z^{(l)}(\mathbf{x})$
    \State $d_{\mathrm{sub}} \gets \Call{CertifySub}{\mathbf{x}, y^*, f_{\mathrm{sub}}, s_o, N_o, N_o', \delta}$
    \State \Return N/A \textbf{ if } $d_{\mathrm{sub}} = \mathrm{N/A}$
    \Comment{Abstain if prediction is incorrect}
    \State \Return $\min\!\bigl(\min_{l \in \mathcal{L}} d_r^{(l)},\; d_{\mathrm{sub}}\bigr)$
\EndFunction
\end{algorithmic}
\end{algorithm}

\subsection{MoE Robustness Certification}
\label{sec:cert_moe}
We combine the two certification stages to derive the final robustness guarantee for MoE LLMs.
\begin{corollary}[Probabilistic Robustness Certification for MoE Models]
\label{thm:moe_prob_robustness}
Let 
$\mathcal{E}_{\mathrm{route}}
:= \{\mathcal{T}_K^{(l)}(\mathbf{x}')
\subseteq \mathcal{T}_Z^{(l)}(\mathbf{x}), \forall l \in \mathcal{L}\}$
and
$\mathcal{E}_{\mathrm{sub}}
:= \{g_{sub}(\mathbf{x}') = y\}$.
For each layer $l \in \mathcal{L}$, let $d_r^{(l)}$ be the certified
$\rho$-relaxed routing radius with confidence level $1-\alpha$.
Let $d_{\mathrm{sub}}$ be the certified robustness radius of the induced
sub-network $f_{\mathrm{sub}}$ with confidence level $1-\delta$. Define
\begin{align}
d_{\mathrm{MoE}}
=
\min\!\left(
\min_{l\in\mathcal{L}} d_r^{(l)},\;
d_{\mathrm{sub}}
\right).
\label{eq:moe_radius}
\end{align}
Then for any $\mathbf{x}' \in \mathcal{X}$ satisfying
$\|\mathbf{x}'-\mathbf{x}\|_0 \le d_{\mathrm{MoE}}$,
\begin{align}
\mathbb{P}\!\left(
\mathcal{E}_{\mathrm{route}}
\cap
\mathcal{E}_{\mathrm{sub}}
\right)
\ge
1 - L\alpha - \delta .
\label{eq:moe_prob_bound}
\end{align}
Consequently, the activated experts of the original MoE model are contained in the certified expert pools. Since
the induced sub-network agrees with the original MoE model whenever the
routing-containment condition holds, we have
$f_{\mathrm{MoE}}(\mathbf{x}')=g_{sub}(\mathbf{x}')=y$.
\end{corollary}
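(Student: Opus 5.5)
The plan is to combine the per-layer routing certificates (Theorem~\ref{the:routing_robust}) and the sub-network certificate through a union bound over the failure events of the statistical estimation, and then to use the agreement between $f$ and $f_{\mathrm{sub}}$ on the routing-containment event.

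First, fix $\mathbf{x}'$ with $\|\mathbf{x}'-\mathbf{x}\|_0\le d_{\mathrm{MoE}}$. Since $d_{\mathrm{MoE}}\le d_r^{(l)}$ for every $l$ and $d_{\mathrm{sub}}$, $\mathbf{x}'$ lies in every individual certified ball. The quantities $\Delta_r$ and $\Delta_{\mathrm{sub}}$ are nondecreasing in the radius, because $\binom{n-d}{s}/\binom{n}{s}$ decreases in $d$. Hence the certification conditions Eq.~\ref{eq:topk-routing-cert} and Eq.~\ref{eq:smoothing-cert}, established at the larger radii, remain valid at radius $\|\mathbf{x}'-\mathbf{x}\|_0$. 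It is worth checking this monotonicity explicitly, since it justifies using radii smaller than the certified ones.

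Next, identify the only sources of randomness: the Monte Carlo confidence bounds $\underline{q}_e,\overline{q}_j$ at each layer, and $\underline{\pi}_y,\overline{\pi}_j$. Let $A_l$ be the event that layer $l$'s bounds hold. By the Clopper--Pearson or simultaneous-interval construction in CertifyRout, $\mathbb{P}(A_l^c)\le\alpha$. Let $B$ be the event that the sub-network bounds hold, with $\mathbb{P}(B^c)\le\delta$. On $A_l$, the deterministic argument in the proof of Theorem~\ref{the:routing_robust} gives $\mathcal{T}_K^{(l)}(\mathbf{x}')\subseteq\mathcal{T}_Z^{(l)}(\mathbf{x})$. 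On $B$, the sub-network theorem gives $g_{sub}(\mathbf{x}')=y$.

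Then $\mathcal{E}_{\mathrm{route}}\cap\mathcal{E}_{\mathrm{sub}}\supseteq \bigcap_l A_l\cap B$, and the union bound gives
\begin{align*}
\mathbb{P}(\mathcal{E}_{\mathrm{route}}\cap\mathcal{E}_{\mathrm{sub}})\ge 1-\sum_{l\in\mathcal{L}}\mathbb{P}(A_l^c)-\mathbb{P}(B^c)\ge 1-L\alpha-\delta .
\end{align*}
No independence between layers is needed. This matters because the same sampled masks may be reused across layers, so the layer events are dependent.

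The last step is the consequence $f_{\mathrm{MoE}}(\mathbf{x}')=y$. On $\mathcal{E}_{\mathrm{route}}$, the experts the original router selects at every layer lie in $\mathcal{T}_Z^{(l)}(\mathbf{x})$. Masking logits outside the pool and re-taking Top-$K$ therefore returns the same set, so $f$ and $f_{\mathrm{sub}}$ coincide layer by layer. This follows by induction on $l$, since identical routing at layers $<l$ gives identical hidden states at layer $l$.

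\textbf{Main obstacle.} Containment is certified for the smoothed, sequence-aggregated routing $\mathcal{T}_K$, whereas the forward-pass equality needs containment of the actual token-level routing on the inputs on which $f_{\mathrm{sub}}$ is evaluated (masked copies of $\mathbf{x}'$). To handle this, I would interpret $f_{\mathrm{MoE}}$ at the smoothed level, so that $g_{sub}$ is compared with the smoothed MoE classifier, and read $\mathcal{E}_{\mathrm{route}}$ as the certified surrogate for containment. This interpretive identification is where I expect the argument to need care.
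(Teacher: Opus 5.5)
\paragraph{Probabilistic bound.} Your route is the paper's. Its proof says three things: every certificate applies inside $d_{\mathrm{MoE}}$; a union bound over the $L$ routing failures and the one sub-network failure gives $L\alpha+\delta$; and under routing containment, $f_{\mathrm{sub}}$ covers the original execution. Your handling of the probabilistic part is correct and more explicit than the paper's. You place the randomness only in the Monte Carlo bounds, show $\bigcap_l A_l\cap B\subseteq\mathcal{E}_{\mathrm{route}}\cap\mathcal{E}_{\mathrm{sub}}$, and note that no independence across layers is needed. Two small remarks. The monotonicity of $\Delta$ is harmless but redundant, since both theorems are already stated for every $\mathbf{x}'$ within the radius. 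Also, $\mathbb{P}(A_l^c)\le\alpha$ is exactly the hypothesis ``confidence level $1-\alpha$''; if it is realized by per-expert Clopper--Pearson intervals, it needs a Bonferroni split over the $K+N-Z$ bounds. So \eqref{eq:moe_prob_bound} is fine.

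\paragraph{The gap.} The genuine gap is the final consequence, and your interpretive identification does not close it. The paper does not close it either; it simply asserts the coverage. Your sentence ``on $\mathcal{E}_{\mathrm{route}}$, the experts the original router selects at every layer lie in $\mathcal{T}_Z^{(l)}(\mathbf{x})$'' does not follow. $\mathcal{E}_{\mathrm{route}}$ only says that the $K$ experts with the largest smoothed \emph{sequence-level} frequencies $q_e(\mathbf{x}')$, computed under the routing masks $\mathcal{H}_r$, lie in the pool. Your layer-by-layer induction instead needs every token's actual $\operatorname{Top-K}(\mathbf{p}^{(l)}_t)$ to lie in the pool, in the forward passes on $\mathcal{M}(\mathbf{x}',\mathcal{H}_o)$. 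Those are the inputs that define $g_{sub}(\mathbf{x}')$, and they follow a different mask distribution when $m_o\neq m_r$. Different tokens route to different experts, so the union of their Top-$K$ sets can include experts outside the pool even at $\mathbf{x}$ itself. Nothing in $\mathcal{E}_{\mathrm{route}}$ bounds how often this happens.

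Reading $f_{\mathrm{MoE}}$ as the smoothed MoE classifier $g$ rightly fixes the base-versus-smoothed mismatch in $f_{\mathrm{MoE}}(\mathbf{x}')=g_{sub}(\mathbf{x}')$. But $g(\mathbf{x}')=g_{sub}(\mathbf{x}')$ still requires $f$ and $f_{\mathrm{sub}}$ to agree on enough of those masked copies.

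\paragraph{What a sound version needs.} You need an extra certified quantity: a bound $\epsilon$ on the $\mathcal{H}_o$-probability that some token at some layer escapes the pool on $\mathcal{M}(\mathbf{x}',\mathcal{H}_o)$. Such a bound holds uniformly over the ball as an upper confidence bound on the corresponding probability at $\mathbf{x}$ plus $\Delta_{\mathrm{sub}}$. This works because masks that hide all perturbed positions give identical inputs, but it adds one more term to your union bound. With it, $|\mathbb{P}(f(\cdot)=c)-\mathbb{P}(f_{\mathrm{sub}}(\cdot)=c)|\le\epsilon$, and a sub-network margin exceeding $2\gamma\Delta_{\mathrm{sub}}+2\epsilon$ yields $g(\mathbf{x}')=y$. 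Without it, the honest conclusion is only $\mathcal{E}_{\mathrm{route}}\cap\mathcal{E}_{\mathrm{sub}}$ itself.

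Separately, identical selected sets give identical outputs only if the gate weights are unchanged. That fails if logits outside the pool are masked before a softmax that is not renormalized over the Top-$K$, as in models without top-$K$ renormalization.
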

\begin{proof}
By the definition of $d_{\mathrm{MoE}}$, the layer-wise $\rho$-relaxed
routing-containment certificates and the sub-network certificate all apply within this
radius. 
A union bound over the $L$ routing certification failures and the sub-network
certification failure gives failure probability at most $L\alpha+\delta$. 
Under routing containment, the original MoE execution is covered by
$f_{\mathrm{sub}}$; hence sub-network invariance implies prediction invariance
of the original MoE model.
\end{proof}

\noindent \textbf{Remarks.} 
CertMoE does not require exact Top-$K$ invariance; perturbed inputs may activate different experts as long as their Top-$K$ experts remain contained in the certified expert pool. 
Routing variation within this pool is handled by the induced sub-network certification: if the prediction remains invariant within the induced sub-network, CertMoE certifies the input; otherwise, it abstains to preserve soundness.

\noindent \textbf{Practical Algorithm.}
The complete \textsc{CertMoE} procedure is summarized in~\Alg~\ref{alg:certmoe}. 
We implement both routing and sub-network certification using Monte Carlo sampling and Clopper-Pearson confidence bounds~\cite{clopper1934use} for the smoothed routing and class probabilities. 
The resulting bounds are used to verify the robustness conditions and compute the final certified radius. 
Following prior randomized smoothing work~\cite{zeng2023certified}, the constants $\beta$ and $\gamma$ are estimated from the corresponding Monte Carlo probabilities, with details deferred to the supplementary material (\S~\ref{sec:avai}).

\section{Evaluation}
\label{sec:eval}

Our evaluation answers fours research questions (RQs):

\begin{itemize}
    \item \textbf{RQ1:} Can CertMoE enable existing certification methods to provide more reliable certificates for MoE LLMs?
    \item \textbf{RQ2:} How sensitive is CertMoE to certification-time parameters such as expert-pool size and masking ratio?
    \item \textbf{RQ3:} What does CertMoE reveal about the sources of certification failures in MoE LLMs?
    \item \textbf{RQ4:} What is the computational overhead of CertMoE?
\end{itemize}



\subsection{Experimental Setup}
\noindent \textbf{Dataset and models.} Following previous works on text robustness certification~\cite{ji2024advancing,wang2026clucert}, we evaluate our method on SST-2~\cite{socher2013recursive} and AGNews~\cite{zhang2015character}, two widely adopted benchmarks for certifying text classifiers. SST-2 is a binary sentiment classification dataset, while AGNews is a four-class topic classification dataset.
For target models, we consider three popular MoE-based LLMs: \textit{Qwen1.5-MoE-A2.7B}~\cite{qwen_moe}, \textit{DeepSeek-v2-Lite}~\cite{deepseekv2} and \textit{Phi-3.5-MoE-instruct}~\cite{abdin2024phi}. As shown in~\T~\ref{tab:moe_comparison}, these models span different depths, top-K settings, and expert capacities, enabling a comprehensive evaluation across heterogeneous MoE architectures.


\noindent \textbf{Baselines.}
We focus on text-domain smoothing methods whose perturbation space is defined over discrete tokens. 
Vision-domain smoothing methods~\cite{zhang2023diffsmooth}, are not included because their theoretical guarantees are developed for continuous visual perturbation spaces.
Therefore, we compare against two representative open-source methods in this category:
(1) \textit{RanMASK}~\cite{zeng2023certified} is a standard text-domain randomized smoothing method that generates masked copies by randomly replacing words with [MASK] tokens and constructs a smoothed classifier via majority voting. (2) \textit{SelfDenoise}~\cite{ji2024advancing} leverages the
multitasking nature of LLMs to first denoise
the noisy inputs and then to make predictions
based on these denoised versions.

\begin{table}[t]
\centering
\caption{Representative MoE LLMs. ``\#Act./Total'' denotes activated/total parameters. For DeepSeek-V2-Lite, $64+2$ denotes 64 routed experts plus 2 shared experts.}
\label{tab:moe_comparison}
\scalebox{0.85}{
\begin{tabular}{lrrcr}
\toprule
\textbf{Model} & \textbf{\#Layers (L)} & \textbf{Top-K} & \textbf{\#Experts (N)} & \textbf{\#Act./ Total} \\
\midrule

\textit{Qwen1.5-MoE-A2.7B} 
& 24 & 4 & 60 & 2.7B / 14.3B \\

\textit{DeepSeek-V2-Lite} 
& 27 & 6 & 64+2 & 2.4B / 16.0B \\

\textit{Phi-3.5-MoE-Instruct}
& 32 & 2 & 16 & 6.6B / 42.0B \\

\bottomrule
\end{tabular}}
\end{table}

\begin{table*}[t]
\centering
\caption{Certified robustness on SST-2 and AGNews datasets under different MoE models and certified methods. CR denotes the certified ratio, i.e., the fraction of inputs receiving a non-zero certificate, and should not be used as a standalone indicator of robustness. VR denotes the violation rate and is reported as a percentage.}
\label{tab:main_results}
\scalebox{0.82}{
\begin{tabular}{l|l
rrrrr S[table-format=1.2e-1]
rrrrr S[table-format=1.2e-1]}
\toprule
\multirow{2}{*}{Models} & \multirow{2}{*}{Method}
& \multicolumn{6}{c}{SST2}
& \multicolumn{6}{c}{AGNews} \\
\cmidrule(lr){3-8} \cmidrule(lr){9-14}
&
& {CR}
& {VR$_{(\text{DB})}$ $\downarrow$}
& {VR$_{(\text{TF})}$ $\downarrow$}
& {VR$_{(\text{TB})}$ $\downarrow$}
& {$\widehat{VR}$ $\downarrow$}
& {p-value}
& {CR}
& {VR$_{(\text{DB})}$ $\downarrow$}
& {VR$_{(\text{TF})}$ $\downarrow$}
& {VR$_{(\text{TB})}$ $\downarrow$}
& {$\widehat{VR}$ $\downarrow$}
& {p-value} \\
\midrule

\multirow{4}{*}{\shortstack{Qwen1.5-\\MoE-A2.7B}}
& RanMASK
& 0.94 & 7.45 & 30.85 & 7.45 & 32.98 & 1.11e-09
& 0.89 & 3.37 & 8.99 & 4.49 & 10.11 & 0.537 \\

& \quad + \textsc{CertMoE}
& 0.64 & 3.12 & 6.25 & 0.00 & 6.25 & 0.894
& 0.78 & 1.28 & 5.13 & 3.85 & 6.41 & 0.901 \\

& SelfDenoise
& 0.96 & 4.17 & 37.05 & 10.42 & 38.54 & 1.25e-13
& 0.84 & 4.76 & 16.67 & 8.33 & 17.86 & 0.0189 \\

& \quad + \textsc{CertMoE}
& 0.42 & 0.00 & 9.52 & 9.52 & 11.90 & 0.412
& 0.81 & 4.94 & 12.35 & 6.17 & 17.28 & 0.0295 \\

\midrule

\multirow{4}{*}{\shortstack{DeepSeek-\\v2-Lite}}
& RanMASK
& 0.89 & 13.48 & 37.08 & 15.73 & 39.33 & 2.80e-13
& 0.72 & 17.81 & 27.39 & 23.29 & 30.13 & 1.21e-06 \\

& \quad + \textsc{CertMoE}
& 0.44 & 2.27 & 9.09 & 6.82 & 9.09 & 0.653
& 0.31 & 0.00 & 6.45 & 6.45 & 6.45 & 0.830 \\

& SelfDenoise
& 0.86 & 5.43 & 32.61 & 17.39 & 39.13 & 5.08e-13
& 0.65 & 3.08 & 13.85 & 6.15 & 13.85 & 0.199 \\

& \quad + \textsc{CertMoE}
& 0.22 & 0.00 & 0.00 & 0.00 & 0.00 & 1.00
& 0.36 & 0.00 & 11.11 & 0.00 & 11.11 & 0.491 \\

\midrule

\multirow{4}{*}{\shortstack{Phi-3.5-MoE\\-Instruct}}
& RanMASK
& 0.94 & 12.77 & 31.91 & 18.09 & 36.17 & 9.80e-12
& 0.89 & 1.12 & 7.87 & 2.25 & 7.87 & 0.798 \\

& \quad + \textsc{CertMoE}
& 0.38 & 0.00 & 0.00 & 0.00 & 0.00 & 1.00
& 0.27 & 0.00 & 0.00 & 0.00 & 0.00 & 1.00 \\

& SelfDenoise
& 0.90 & 4.40 & 34.44 & 14.44 & 35.56 & 6.61e-11
& 0.89 & 6.74 & 16.85 & 6.74 & 20.22 & 0.00285 \\

& \quad + \textsc{CertMoE}
& 0.31 & 0.00 & 0.00 & 0.00 & 0.00 & 1.00
& 0.26 & 0.00 & 0.00 & 0.00 & 0.00 & 1.00 \\

\bottomrule
\end{tabular}}
\end{table*}

\noindent \textbf{Configurations.}
We conducted all experiments on NVIDIA H800 GPUs with 80GB memory.
Following prior work~\cite{ji2024advancing}, we evaluate certified robustness on a randomly selected subset of 100 test samples per dataset, due to the high computational cost of randomized smoothing-based certification. 
For fairness, all methods are evaluated on the same sampled subset, and we use the same prediction and certification sample sizes for each certification stage. For all methods, each smoothed prediction is estimated with 100 randomly perturbed samples, and each certification bound is computed with 500 samples. 
For \textsc{CertMoE}, this sampling configuration is applied to both routing certification and sub-network certification. We set the input mask ratio to 0.2 for all methods, including both routing and sub-network certification in \textsc{CertMoE}.

To implement SelfDenoise, we employ an open-source LLM Qwen3-8B~\cite{qwen3technicalreport} to improve computational effectiveness, which denoises the noisy inputs. The overall failure probability is set to 0.10, corresponding to a confidence level of 0.90; for \textsc{CertMoE}, we allocate the failure budget by setting $L\alpha=0.05$ and $\delta=0.05$. We select the certified expert ratio $\rho$ on a held-out validation set. For SST-2, $\rho$ is set to $0.2$ for Qwen and DeepSeek models and $0.4$ for Phi; for AGNews, $\rho$ is set to $0.6$ for all models. More experiments on $\rho$ are provided in ablation studies (\S~\ref{sec:ab_studies}). To assess the reliability of the certification, we evaluate the certified samples under three attacks, TextBugger (TB)~\cite{litextbugger}, DeepWordBug (DB)~\cite{gao2018black} and TextFooler (TF)~\cite{jin2020bert}, using the default settings of the TextAttack toolkit~\cite{morris2020textattack}. We explain their usage in the evaluation metrics.

\subsection{Evaluation Metrics}
Standard metrics such as average certified radius and certified accuracy~\cite{zeng2023certified} presuppose a static routing structure, under which certificates are mathematically valid. In MoE settings, however, routing can shift under perturbation, violating this assumption and rendering such metrics ill-suited. In this paper,
we therefore evaluate from two complementary perspectives:
\emph{coverage}, measured by certified ratio,
and \emph{reliability}, measured by violation rate and $p$-value.

\noindent \textbf{Certified Ratio (CR).}
Given a test set $\{\mathbf{x}_i, y_i\}_{i=1}^{N_c}$, we define the certified ratio as the fraction of inputs for which the method returns a  non-trivial certificate:
$
    \text{CR} = \frac{1}{N_c} \sum_{i=1}^{N_c} \mathbb{I}(d_i > 0),
$
where $d_i$ is the certified radius for $\mathbf{x}_i$ and is set to $0$ when certification fails. CR measures how often the certification method issues certificates over the evaluated inputs.

\noindent \textbf{Violation Rate (VR).}
Let $\bar f$ denote the smoothed prediction function used by a given
certification method, and let $y_i=\bar f(\mathbf{x}_i)$ be the certified
label of input $\mathbf{x}_i$. Let $\mathcal{C}=\{i:d_i>0\}$ denote the set
of certified inputs. VR measures the fraction of certified inputs for which
a counterexample exists within the certified region:
\begin{equation}
    \mathrm{VR} =
    \frac{1}{|\mathcal{C}|}
    \sum_{i\in\mathcal{C}}
    \mathbb{I}\!\left(
    \exists\,\mathbf{x}'\in\mathcal{B}(\mathbf{x}_i,d_i):
    \bar f(\mathbf{x}')\neq y_i
    \right).
\end{equation}
Since exhaustive search over $\mathcal{B}(\mathbf{x}_i,d_i)$ is intractable,
we approximate VR via three attack strategies
$\mathrm{atk}\in\{\mathrm{TB},\mathrm{DB},\mathrm{TF}\}$,
each generating $T$ adversarial candidates:
\begin{equation}
    \mathrm{VR}_{(\mathrm{atk})} =
    \frac{1}{|\mathcal{C}|}
    \sum_{i\in\mathcal{C}}
    \max_{j\in\{1,\dots,T\}}
    \mathbb{I}\!\left(
    \bar f\!\left(
    \mathcal{A}_{\mathrm{atk}}(\mathbf{x}_i,d_i,j)
    \right)
    \neq y_i
    \right).
\end{equation}
We aggregate the counterexamples found by all attacks and report the
attack-observed violation rate:
\begin{equation}
    \widehat{\mathrm{VR}} =
    \frac{1}{|\mathcal{C}|}
    \sum_{i\in\mathcal{C}}
    \max_{\mathrm{atk}}
    \max_{j\in\{1,\dots,T\}}
    \mathbb{I}\!\left(
    \bar f\!\left(
    \mathcal{A}_{\mathrm{atk}}(\mathbf{x}_i,d_i,j)
    \right)
    \neq y_i
    \right).
\end{equation}
Since these attacks only explore a subset of the certified region,
$\widehat{\mathrm{VR}}$ should be interpreted as the empirical violation
rate observed by our attacks. 
All attacks share the same budget of $d_i$ word modifications, matching the
$\ell_0$ metric used in certification; for character-level attacks
(DB, TB), each perturbed word counts as one unit regardless of the number of
character-level edits.

\noindent \textbf{Statistical Significance (p-value).}
To complement VR, we employ a one-sided exact binomial test
under the null hypothesis $H_0: p \leq 0.1$, where $p$ denotes
the true violation probability among certified inputs.
The test assesses whether the observed $\widehat{VR}$
provides statistically significant evidence that the violation
rate exceeds the 10\% threshold.
A larger p-value indicates insufficient evidence to reject
this hypothesis.

\subsection{Certification Reliability (RQ1)}
\label{sec:main_results}
\noindent \textbf{Overall Certification Reliability.} We report the certified robustness results on three MoE models in~\T~\ref{tab:main_results}. Across all three models, standard certification methods, including RanMASK and SelfDenoise, often achieve relatively high certification rates (CR), but their certificates suffer from substantial violation rates (VR), especially on SST-2 where VR frequently exceeds 30\%. This indicates that the high CRs of baseline methods overstate their effective certification reliability on MoE LLMs, as many issued certificates can still be violated in practice. The extremely low $p$-values of these baselines further suggest that the observed violations are unlikely to be caused by random fluctuation, but instead reflect a systematic mismatch between existing certification and dynamic MoE routing. In contrast, augmenting these certification methods with \textsc{CertMoE} leads to a more reliability-aware certification outcome. Rather than issuing potentially unreliable certificates, \textsc{CertMoE} provides a certificate only when its hierarchical certification conditions can be established, leading to fewer CR but empirically more reliable certificates.

We further observe that most methods exhibit lower VRs on AGNews than on SST-2. One possible reason is that AGNews contains four relatively distinct topic categories, making its labels harder to change within a small certified radius, whereas SST-2 sentiment labels can be affected by subtle word substitutions. Besides, a few challenging cases where residual violations remain statistically significant after applying \textsc{CertMoE}, especially for SelfDenoise on some model--dataset pairs. This may be because SelfDenoise reconstructs masked inputs before classification and can introduce new words or local expression changes, adding extra variation beyond routing dynamics. These cases suggest that routing containment substantially mitigates routing-induced unreliability, while denoising-based certifiers require additional treatment.

\noindent \textbf{Violation Rate across Perturbation Strengths.}
We further examine empirical violation rates under varying perturbation strengths, measured as the fraction of perturbed tokens relative to sentence length, on representative SST-2 settings. As shown in Fig.~\ref{fig:vio_rate}, baseline certification methods, both RanMASK and SelfDenoise,  exhibit consistently high violation rates across different perturbation budgets, whereas incorporating \textsc{CertMoE} substantially reduces violations on both DeepSeek-V2-Lite and Qwen1.5-MoE-A2.7B. This indicates that \textsc{CertMoE} maintains low violation rates not only in the overall results but also at each individual perturbation strength, suggesting stable empirical reliability under different attack budgets. 

\begin{figure}[t]
  \centering
  \includegraphics[scale=0.25]{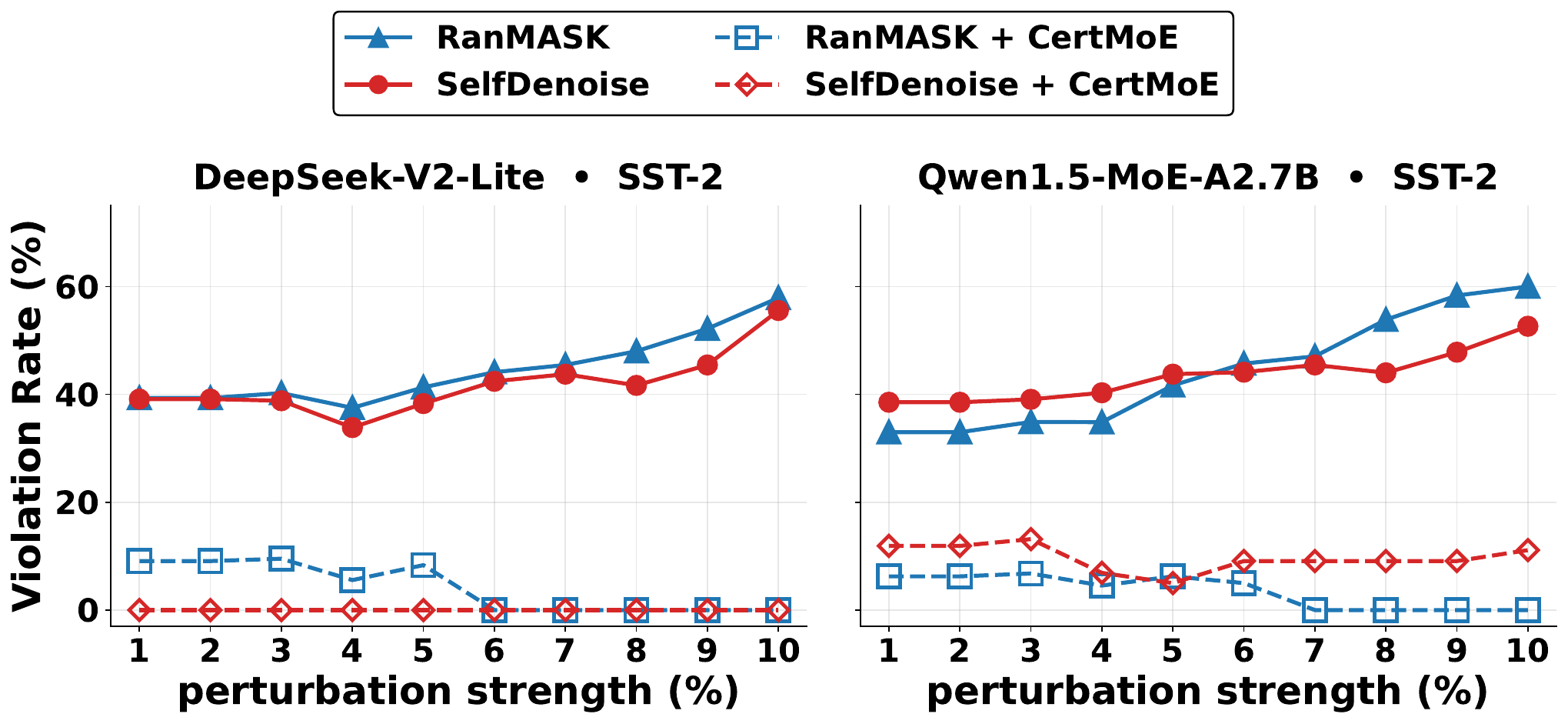}
  \caption{Violation Rate of different certification methods, plotted as a function of perturbation intensity.}
  \label{fig:vio_rate}
\end{figure}

\subsection{Parameter Sensitivity (RQ2)}
\label{sec:ab_studies}
\noindent \textbf{Impact of Certified Expert-Pool Ratio.}
We study the impact of $\rho$, which determines the size $Z=\lceil \rho N\rceil$ of the certified expert pool used for routing containment. ~\T~\ref{tab:rho_ablation} reports a sensitivity analysis over
different $\rho$ values on SST-2 with Qwen1.5-MoE-A2.7B under RanMASK. As shown, the strict setting with $Z=K$ in the first row certifies no inputs, showing that exact routing containment is too restrictive. Increasing $\rho$ enlarges the certified expert pool and makes the routing-containment condition easier to satisfy, so \textsc{CertMoE} assigns certificates to more inputs. However, this does not imply better reliability. Larger pools exhibit much higher violation rates than $\rho=0.2$. This is because an overly permissive pool weakens the constraint on perturbation-induced routing changes and requires the induced sub-network to account for a broader set of possible expert selections. These results indicate that $\rho$ controls a trade-off between the permissiveness of routing-containment certification and the reliability of the resulting certificates.
\begin{table}[t]
\centering
\caption{Impact of certified expert ratio $\rho$ on certified rate (CR) and violation rate (VR).}
\label{tab:rho_ablation}
\scalebox{0.9}{
\begin{tabular}{l r r r r rr}
\toprule
$\rho$ & \#CP & CR & VR$_{(\text{DB})}$ & VR$_{(\text{TF})}$ & VR$_{(\text{TB})}$ & $\widehat{VR}$\\
\midrule
$N/K$ & 6  & 0    & N/A   & N/A   & N/A   & N/A\\
0.2   & 12 & 0.64 & 3.12  & 6.25  & 0  & 6.25 \\
0.4   & 24 & 0.69 & 23.19 & 39.13 & 26.08 & 44.93 \\
0.6   & 36 & 0.84 & 12.05 & 43.28 & 20.48 & 48.19 \\
0.8   & 48 & 0.93 & 15.05 & 36.56 & 15.05 & 40.86\\
\bottomrule
\end{tabular}}
\end{table}

\noindent \textbf{Impact of Mask Ratio on Smoothing.}
To examine whether the reliability improvement of \textsc{CertMoE} depends on a specific mask-ratio choice, we conduct an ablation study on Qwen1.5-MoE-A2.7B using SST-2 by varying the output mask ratio $m_o$ and routing mask ratio $m_r$. We compare RanMASK with its \textsc{CertMoE}-augmented counterpart in terms of certification coverage and empirical violation ratio. The results in~\T~\ref{tab:ab_mask} show a trend consistent with the main results in~\T~\ref{tab:main_results}. Changing the mask ratio alone does not resolve the reliability issue of existing smoothing methods: RanMASK tends to certify a large fraction of inputs under different $m_o$ settings (i.e., 0.2 and 0.4), but the resulting certificates still exhibit substantial violation rates, especially under TextFooler. In contrast, adding \textsc{CertMoE} leads to more selective certification outcomes across different $(m_o, m_r)$ settings and substantially reduces VRs in practics. These results further confirm that \textsc{CertMoE} remains effective across different mask-ratio configurations, and proper mask-ratio adjustment may further improve the certificates' reliability.

\begin{table}[t]
\centering
\caption{Impact of mask settings on the robustness of RanMASK and ours across against three attacks.}
\label{tab:ab_mask}
\scalebox{0.9}{
\begin{tabular}{lrrrrrr} 
\toprule
Method & $m_o$ & $m_r$ & CR & VR$_{(\text{DB})}$ & VR$_{(\text{TF})}$ & VR$_{(\text{TB})}$ \\
\midrule
RanMASK & 0.2 & N/A & 0.94 & 7.45 & 30.85 & 7.45 \\ \midrule
\multirow{3}{*}{+ \textsc{CertMoE}} & \multirow{3}{*}{0.2} & 0.2 & 0.64 & 3.12 & 6.25 & 0.00 \\

 &  & 0.4 & 0.64 & 4.69 & 6.25 & 10.93 \\
 &  & 0.6 & 0.61 & 0.00 & 3.27 & 11.47 \\
\midrule
RanMASK & 0.4 & N/A & 0.91 & 9.89 & 34.07 & 9.89 \\
\midrule
\multirow{3}{*}{+ \textsc{CertMoE}} & \multirow{3}{*}{0.4} & 0.2 & 0.61 & 0.00 & 3.27 & 6.67 \\

 &  & 0.4 & 0.61 & 1.63 & 1.63 & 1.63 \\
 &  & 0.6 & 0.61 & 0.00 & 0.00 & 0.00 \\
\bottomrule
\end{tabular}}
\end{table}

\begin{figure}[t]
  \centering
  \includegraphics[scale=0.5]{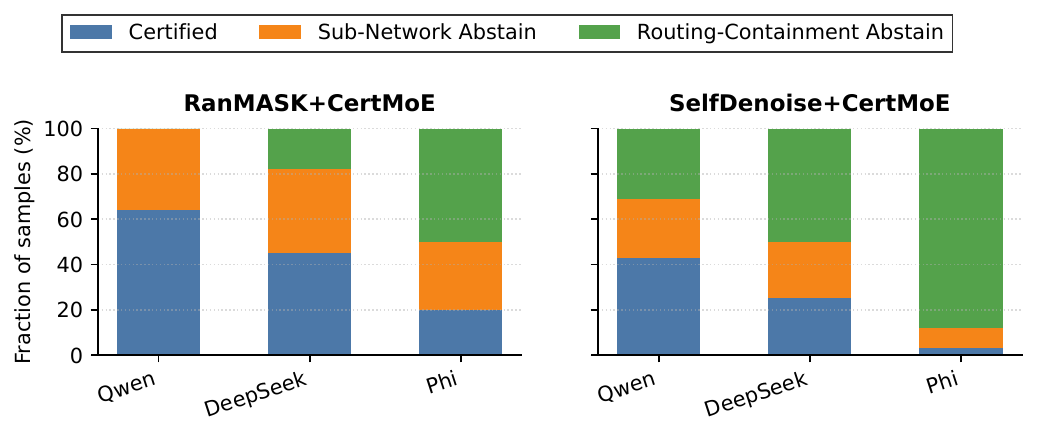}
  \caption{Certification outcome decomposition of CertMoE outcomes on SST-2 under RanMASK and SelfDenoise across MoE models. Model names are abbreviated by their prefixes.}
  \label{fig:abstain_decomp}
\end{figure}

\subsection{Failure Attribution (RQ3)}
\noindent \textbf{Abstention Analysis.}
We analyze the sources of abstention on SST-2 across three MoE LLMs using RanMASK and SelfDenoise with $\rho=0.2$. Each sample is assigned to one of three mutually exclusive outcomes, i.e., certified, sub-network abstain, or routing-containment abstain, where routing containment is checked first and only passing samples proceed to sub-network certification. The results in~\F~\ref{fig:abstain_decomp} reveal two consistent patterns. First, routing-containment abstention rises monotonically with model size, reflecting that larger MoE models operate over a more fine-grained expert space and exhibit greater structural variability in routing decisions. Second, this effect is amplified under SelfDenoise, where input rewrites alter surface form and syntax, shifting hidden representations enough to produce inconsistent expert selections across perturbation samples. In both cases, \textsc{CertMoE} withholds certification in response to routing variability, reflecting its intended conservative behavior. These results underscore that routing certification is a necessary and earlier checkpoint than prediction certification in MoE models.

\begin{table*}[t]
\centering
\setlength{\tabcolsep}{4pt}
\caption{Certification case study on SST-2. 
\textbf{Bold words} indicate the substitution target in the original text and the corresponding adversarial replacement. Each adversarial example (AE) is generated within the certified robust bound.}
\label{tab:case_study_bottleneck}
\scalebox{0.92}{
\begin{tabularx}{\linewidth}{@{}l l>{\raggedright\arraybackslash}X>{\raggedright\arraybackslash}X>{\centering\arraybackslash}p{0.09\linewidth}>{\raggedright\arraybackslash}p{0.2\linewidth}@{}}
\toprule
Sample ID & Attack Method & Original Text & Adversarial Example (AE) & RanMASK & RanMASK+CertMoE (Ours) \\
\midrule
1 & TextFooler
& is this \textbf{progress}?
& is this \textbf{\textcolor{red}{breakthrough}}?
& \textcolor{red}{Violated}
& Abstain (Sub-Network) \\
\midrule
2 & DeepWordBug
& for me, this opera is \textbf{n't} a favorite, so it is a long time before the fat lady sings.
& for me, this opera is \textbf{\textcolor{red}{i't}} a favorite, so it is a long time before the fat lady sings.
& \textcolor{red}{Violated}
& Abstain (Sub-Network) \\
\midrule
3 & TextFooler
& it is one of the most \textbf{honest} films ever made about hollywood.
& it is one of the most \textbf{\textcolor{red}{franco}} films ever made about hollywood.
& \textcolor{red}{Violated}
& Abstain (Routing-Containment) \\
\midrule
4 & TextFooler
& it is like a \textbf{poem}.
& it is like a \textbf{\textcolor{red}{anthology}}.
& \textcolor{red}{Violated}
& \textcolor{teal}{Certified \& No AE found on the smoothed function} \\
\midrule
5 & TextBugger
& the movie has a soft, percolating magi\textbf{c}, a deadpan suspense.
& the movie has a soft, percolating magi\textbf{\color{red}{{\scriptsize c}}}, a deadpan suspense.
& \textcolor{red}{Violated}
& \textcolor{teal}{Certified \& No AE found on the smoothed function} \\
\bottomrule
\end{tabularx}}
\end{table*}

\noindent \textbf{Case Studies.}
Table~\ref{tab:case_study_bottleneck} provides a fine-grained analysis of how \textsc{CertMoE} refines certification outcomes compared with baseline methods on Qwen1.5-MoE-A2.7B.
As shown, the baseline method, RanMASK, assigns certificates to all five examples.
However, each baseline certificate can be falsified by a successful adversarial substitution generated within its certified radius, indicating that these certificates may overestimate the true robustness in practice. By contrast, \textsc{CertMoE} raises abstain when certifying the first three examples, and further attributes each uncertified outcome to routing-containment or induced sub-network.
Besides, for the last two examples, \textsc{CertMoE} preserves certification when both routing and sub-network conditions are genuinely satisfied, and no adversarial substitutions can falsify its certified predictions.
Overall, these cases demonstrate that \textsc{CertMoE} yields both reliable and interpretable certificates: it issues certificates only when jointly verified by both routing and sub-network components, while component-level attribution further points to actionable directions for improving model robustness.

\subsection{Computation Cost (RQ4)}

\noindent \textbf{Runtime Cost.} 
We further analyze the per-sample computation cost of \textsc{CertMoE}. The runtime is measured on SST-2 and AGNews under three MoE LLMs. For each model, we show two bars: the left bar reports the base prediction and certification costs of RanMASK alone, and the right bar shows the total cost of \textsc{CertMoE} augmented on top of RanMASK, further decomposed into the routing-containment (RC) and sub-network (SN) prediction (Pred) and certification (Cert) costs. As shown in Fig.~\ref{fig:cost_breakdown}, \textsc{CertMoE} roughly doubles the total runtime compared to RanMASK alone across most settings. Specifically, RC certification and SN certification together constitute the primary cost, with their relative proportions varying across models and datasets. In most settings, RC Cert accounts for the larger share; however, on AGNews with Qwen, SN Cert becomes the dominant cost, as more samples pass the RC stage and proceed to sub-network certification. In contrast, Phi exhibits relatively low SN cost because a larger fraction of samples are rejected at the RC stage and thus never enter sub-network certification, highlighting the role of RC certification as an effective pre-filter that controls the overall computational overhead. We note that this overhead is a one-time cost incurred only when a certificate is requested for a certain input, and does not affect standard inference.
\begin{figure}[t]
  \centering
  \includegraphics[scale=0.55]{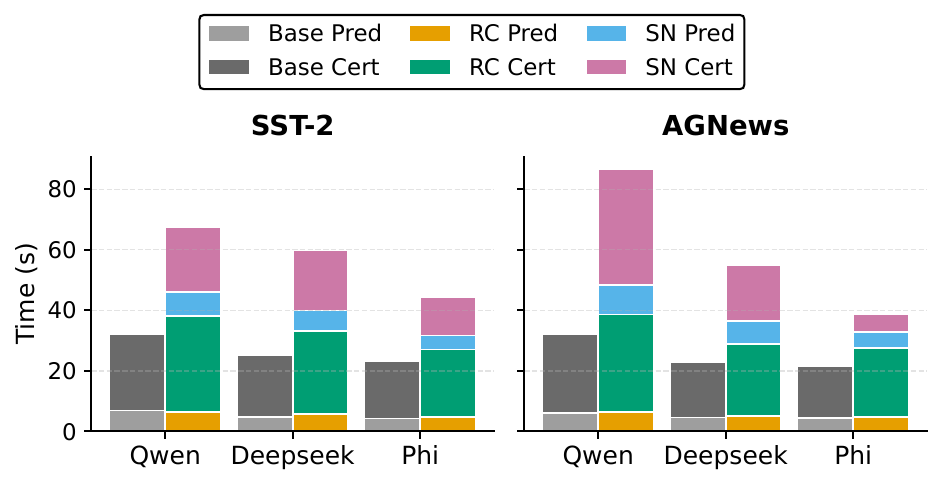}
  \caption{Per-sample computation cost breakdown w.r.t. LLMs.}
  \label{fig:cost_breakdown}
\end{figure}

\noindent \textbf{Memory Overhead.}
\textsc{CertMoE} uses the same underlying MoE model as the randomized smoothing
baseline and does not instantiate additional model copies or expert networks.
Beyond the standard memory required for model parameters and forward-pass
activations, CertMoE only maintains lightweight certification statistics. For a
model with $L$ MoE layers, $N$ experts per layer, and certified expert-pool size
$Z$, the persistent additional memory consists mainly of
expert count vectors and certified expert pools, with complexity
$O(LN + LZ)$. Since $Z \leq N$, this reduces to $O(LN)$, which is negligible
compared with the memory footprint of model parameters and activations. In our
implementation, these routing statistics are stored as small CPU-side arrays or
dictionaries, and Monte Carlo samples are processed without caching their
intermediate activations. Therefore, CertMoE does not materially increase peak
GPU memory beyond the standard batched MoE forward pass.

\section{Discussion}
\label{sec:discussion}

\noindent \textbf{Threats to Validity.}
Several threats may affect our study. Implementation errors and stochastic certifiers may influence the results; we mitigate this by using the same perturbation space, base certifiers, and evaluation protocol for baselines and \textsc{CertMoE}. Our metrics are tied to the considered perturbation model and should not be interpreted as robustness against arbitrary text attacks. The evaluation focuses on text classification and representative MoE LLMs, so generalization to other architectures or generation tasks requires further study. Finally, runtime is hardware- and implementation-dependent and is used mainly for relative cost comparison.

\noindent \textbf{Extension to Generation Tasks.} Certified robustness for sequence generation remains less developed than classification-style certification. We note that text generation can be viewed as a sequential classification task, suggesting that classification-based certification can, in principle, be applied at each generation step. Furthermore, existing certification approaches for generative models typically constrain distances in the output vector space, providing a complementary foundation for generation-time guarantees. Since \textsc{CertMoE} is a plug-in framework compatible with existing base certifiers, we view its extension to sequence generation as a promising direction for future work.

\noindent \textbf{Extension to Other Modalities.}
\textsc{CertMoE} can also apply to MoE architectures in other modalities, as long as the underlying perturbation and base certification techniques are adapted to the modality. For example, text masking can be replaced by patch- or pixel-level perturbations for vision-domain MoE models. This suggests that the main extension challenge lies in matching the certification backbone to modality-specific perturbation spaces.

\noindent \textbf{Certified Training for MoE.}
The theory of \textsc{CertMoE} can also inform future certified training for MoE models. Since certifiable robustness depends on both routing containment and prediction stability, training objectives could explicitly promote stable routing under perturbations and robust predictions within the induced sub-network. We leave the design of routing-aware certified training methods to future work.

\section{Conclusion}

This paper identifies a fundamental mismatch between existing certification methods and MoE LLMs: existing certifiers assume a fixed computation graph, whereas MoE routing makes the certified model input-dependent. To close this gap, we propose \textsc{CertMoE}, a routing-aware certification framework that explicitly accounts for routing uncertainty and its impact on robustness guarantees. Experiments on state-of-the-art MoE LLMs and benchmark datasets show that \textsc{CertMoE} consistently produces more reliable certificates.
\section{Data Availability}
\label{sec:avai}
The code, data, and supplementary material for this paper are available at \url{https://anonymous.4open.science/r/CertMoE-69CB}.

\bibliographystyle{./IEEEtran}
\bibliography{./IEEEabrv,./IEEEexample}

\end{document}